\documentclass[11pt]{amsart}
\usepackage{verbatim}
\usepackage{amsmath,amsfonts,amscd,amssymb,epsfig,euscript,bm}
\usepackage{amsxtra,mathrsfs}
\usepackage{pdfsync}
\frenchspacing
\newtheorem{theorem}{Theorem}[section]

\newtheorem{lemma}{Lemma}[section]

\theoremstyle{definition}
{}

\theoremstyle{remark} 
\newtheorem{remark}{Remark}[section]

\newcommand{\spbp}{\mathcal{\sqrt{-1}\partial \bar{\partial}}}

\numberwithin{equation}{section}
\usepackage{hyperref}
\hypersetup{colorlinks,citecolor=blue,plainpages=false,hypertexnames=false}
\begin{document}
\title[Quillen metrics and perturbed equations]{Quillen metrics and perturbed equations}
\author{Vamsi Pritham Pingali}
\address{Department of Mathematics, Indian Institute of Science, Bangalore, India - 560012}
\email{vamsipingali@math.iisc.ernet.in}
\begin{abstract}
We come up with infinite-dimensional prequantum line bundles and moment map interpretations of three different sets of equations - the generalised Monge-Amp\`ere equation, the almost Hitchin system, and the Calabi-Yang-Mills equations. These are all perturbations of already existing equations. Our construction for the generalised Monge-Amp\`ere equation is conditioned on a conjecture from algebraic geometry. In addition, we prove that for small values of the perturbation parameters, some of these equations have solutions. 
\end{abstract}
\maketitle
\section{Introduction}\label{Introsec}
\indent The Kempf-Ness theorem of algebraic geometry roughly states that if a Lie group acts on a K\"ahler manifold in a Hamiltonian manner, then in every gauge orbit of a stable point there exists a zero of the moment map. Surprisingly, many PDE of mathematical physics and geometric analysis can be formally interpreted as the zeroes of infinite-dimensional moment maps of infinite-dimensional symplectic manifolds. Thus the finite-dimensional Kempf-Ness theorem can guide us to find the correct algebro-geometric stability obstruction towards solving the PDE. But if we want to be faithful to the finite-dimensional Kempf-Ness theorem, we would want the infinite-dimensional symplectic form to arise as the curvature of an infinite-dimensional hermitian holomorphic line bundle. Indeed, the correct infinite dimensional line bundle often turns out to be a Quillen determinant bundle (see \cite{don2} for instance). \\
\indent From a physics perspective, many infinite-dimensional symplectic manifolds that arise in the above picture are spaces of fields like Gauge fields, Higgs fields, etc. To quantise these fields, i.e., to do quantum field theory, one would like to formally construct a prequantum line bundle, the space of sections of which would form the prequantum Hilbert/Fock space. So for multiple reasons it is beneficial to identify not only a moment map interpretation of a given PDE, but also the prequantum line bundle whose curvature is the correct symplectic form.\\
\indent In this paper we identify the prequantum line bundles for three sets of equations - the generalised Monge-Amp\`ere equation, the almost Hitchin system, and the Calabi-Yang-Mills equations. All of these are in some sense, perturbations of already existing equations, namely, the usual Monge-Amp\`ere equation, the Hitchin system, and the decoupled system of the Hermite-Einstein and the Monge-Amp\`ere equation respectively. Interestingly enough, our construction of the prequantum bundle for the generalised Monge-Amp\`ere equation is conditioned on the Hodge conjecture. With these generalities in mind, we proceed to describe the main results of the paper.\\

\indent Let $(M,\omega)$ be a compact K\"ahler $n$-complex dimensional manifold such that $[\omega]=[c_1(L,h)]$ for some hermitian holomorphic line bundle $(L,h)$ satisfying $\displaystyle \int \omega^n =1$. Our first result concerns the generalised Monge-Amp\`ere equation.
\begin{theorem}  For $1\leq k \leq n$ let $\alpha_k$ be closed real $(k,k)$-forms such that $[\alpha_k] \in H^{2k}(M,\mathbb{Q})$, $\alpha_n >0$, $\displaystyle \int _M \omega^n = \int _M \sum_k \alpha_k \omega^{n-k}$, and $n\omega^{n-1} - \sum_{k} (n-k)\omega ^{n-k-1} \alpha_k >0$. Suppose $\mathcal{A}^{1,1}$ is the space of smooth unitary integrable connections on $L$.  \\
Assuming that $[\alpha_k] = [\mathrm{ch}(\mathcal{E}_k)]$ for some holomorphic virtual bundles $\mathcal{E}_k$, there exists a holomorphic line bundle $\mathbf{Q}$ on $\mathcal{A}^{1,1}$ with a unitary Chern connection whose curvature $\Omega$ is a symplectic form on $\mathcal{A}^{1,1}$. Moreover, the unitary gauge group $\mathcal{G}$ acts in a Hamiltonian manner on $\mathcal{A}^{1,1}$ with a moment map $\mu$. There is a zero of the moment map in every complex gauge orbit if and only if the following equation is satisfied for a smooth function $\phi$. (In what follows, $\omega_{\phi}=\omega+\spbp \phi$.) 
\begin{gather}
\omega_{\phi}^n = \displaystyle \sum \alpha _k \omega_{\phi}^{n-k}.
\label{genma}
\end{gather}
\label{maingenma}
\end{theorem}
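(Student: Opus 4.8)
The plan is to realize $\A^{1,1}$ as an infinite-dimensional K\"ahler manifold and $\mathbf{Q}$ as a Quillen determinant line bundle, then read the moment map off the Bismut--Gillet--Soul\'e curvature formula. First I would set up the geometry of $\A^{1,1}$. Since $L$ is a line bundle, the space of unitary connections is affine over $\sqrt{-1}\,\Omega^1(M,\R)$, and $\A^{1,1}$ is the affine subspace cut out by $F_A^{0,2}=0$, with tangent space $T_A\A^{1,1}=\{a\in\sqrt{-1}\,\Omega^1(M,\R):\delb a^{0,1}=0\}$. The complex structure sending $a^{0,1}\mapsto\sqrt{-1}\,a^{0,1}$ makes $\A^{1,1}$ complex, and the complexified gauge group $\mathcal{G}^{\C}=C^\infty(M,\C^*)$ acts on it. The key observation is that the orbit of the Chern connection $A_0$ of $h$ is exactly the set of Chern connections of the metrics $h e^{-\phi}$, whose normalized curvature is $\omega_\phi$; thus complex gauge orbits are parametrized by K\"ahler potentials $\phi$, which is the bridge that turns ``$\mu=0$ on the orbit of $A_0$'' into equation \eqref{genma}.

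Next comes the construction of $\mathbf{Q}$. Over $M\times\A^{1,1}$ there is a universal (Poincar\'e) line bundle $\mathcal{L}$ with its tautological connection. The hypothesis $[\alpha_k]=[\ch(\cE_k)]$, which is where the Hodge conjecture enters, lets me represent each $\alpha_k$ by the Chern character form of an honest holomorphic virtual bundle $\cE_k$ on $M$, pulled back to $M\times\A^{1,1}$. I would then define $\mathbf{Q}$ as a product of Bismut--Gillet--Soul\'e/Quillen determinant line bundles of the virtual families $\mathcal{L}^{\otimes j}\otimes\cE_k$ (and of $\mathcal{L}$ itself), with exponents chosen so that the degree-two part of the Grothendieck--Riemann--Roch pushforward along $M$ yields the target form. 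Equipping $\mathbf{Q}$ with the Quillen metric, its Chern connection has curvature equal to the $\A^{1,1}$-component of $\int_M \ch(\mathcal{L})\,\ch(\cE_\bullet)\,\mathrm{Td}(M)$, and the main computation is to verify that with the correct exponents this equals
\[
\Omega(a,b)=\int_M (a\wedge b)\wedge\Big(n\,\omega^{n-1}-\sum_k (n-k)\,\alpha_k\,\omega^{n-k-1}\Big)
\]
up to a universal constant. Being a curvature form, $\Omega$ is automatically closed, and its non-degeneracy is precisely the positivity hypothesis $n\omega^{n-1}-\sum_k(n-k)\alpha_k\omega^{n-k-1}>0$, which makes the $(n-1,n-1)$-form pairing on $1$-forms non-degenerate; hence $\Omega$ is symplectic.

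For the moment map I would use that $\mathbf{Q}$ is $\mathcal{G}$-equivariant, so the action is Hamiltonian and $\mu$ is read off from the infinitesimal action on the fibers together with the linearization of the curvature of $\mathcal{L}$. The expected formula is
\[
\langle\mu(A),\xi\rangle=\int_M \xi\,\Big(\omega_A^n-\sum_k\alpha_k\,\omega_A^{n-k}\Big),\qquad \xi\in\mathrm{Lie}(\mathcal{G})=\Omega^0(M,\sqrt{-1}\,\R),
\]
where $\omega_A$ is the normalized curvature of $A$. Restricting to the complex gauge orbit of $A_0$, where $\omega_A=\omega_\phi$, the condition $\mu=0$ becomes $\omega_\phi^n=\sum_k\alpha_k\omega_\phi^{n-k}$, which is \eqref{genma}; the Kempf--Ness philosophy then supplies the ``if and only if in every complex gauge orbit'' phrasing. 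Finally, the constant functions act trivially, so a necessary condition for $\mu$ to vanish is $\int_M(\omega_\phi^n-\sum_k\alpha_k\omega_\phi^{n-k})=0$; by cohomological invariance this equals $\int_M\omega^n-\int_M\sum_k\alpha_k\omega^{n-k}$, and the hypothesis $\int_M\omega^n=\int_M\sum_k\alpha_k\omega^{n-k}$ is exactly what makes zero an attainable value of the moment map.

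The step I expect to be the main obstacle is the curvature computation: pinning down the exponents in the product of determinant bundles and checking, via the Bismut--Gillet--Soul\'e curvature theorem, that the degree-two part of the pushforward reproduces $\Omega$ with the precise combinatorial coefficients $n$ and $(n-k)$. A secondary subtlety is arranging the Hodge-conjecture-dependent choice of the $\cE_k$ to be compatible with the holomorphic and $\mathcal{G}$-equivariant structure on $M\times\A^{1,1}$, so that the Quillen metric and its curvature are genuinely well defined.
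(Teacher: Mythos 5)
Your overall strategy coincides with the paper's: identify the candidate symplectic form and moment map by a direct variational computation (lemma \ref{symplemm1}), realize $\mathbf{Q}$ as a Quillen determinant of a virtual bundle built from powers of the universal line bundle tensored with the $\cE_k$, compute its curvature via the Bismut--Gillet--Soul\'e families index theorem (lemma \ref{curvlem}), and identify complex gauge orbits with K\"ahler potentials so that vanishing of $\mu$ on an orbit becomes \eqref{genma}. The combinatorial step you flag as the main obstacle --- choosing exponents so that the pushforward reproduces $\Omega$ --- is not where the difficulty lies: the paper resolves it by inverting a Vandermonde-type matrix (equation \eqref{Lk}) to manufacture virtual bundles $\mathcal{L}_k$ with $\ch(\mathcal{L}_k)=N\left(\frac{\sqrt{-1}F}{2\pi}\right)^{n+1-k}$ exactly, then tensoring these with the $\cE_k$.

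The genuine gap is in a step you treat as immediate: ``the hypothesis $[\alpha_k]=[\ch(\cE_k)]$ \ldots lets me represent each $\alpha_k$ by the Chern character form of an honest holomorphic virtual bundle.'' The hypothesis is only cohomological; choosing a hermitian metric $h_k$ on $\cE_k$ gives a Chern--Weil form $\ch(\cE_k,h_k)$ that differs from $\alpha_k$ by a $\spbp$-exact form. The families index theorem computes the Quillen curvature in terms of these specific Chern--Weil representatives, so with an arbitrary choice of metrics your construction produces a symplectic form merely cohomologous to $\Omega$, whose moment map vanishes exactly when $\omega_\phi^n=\sum_k \ch_k(\cE_k,h_k)\,\omega_\phi^{n-k}$ --- a genuinely different PDE from \eqref{genma}, since the representatives differ as forms. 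Closing this gap is precisely the content of lemma \ref{Hodge} in the paper: by the holomorphic Venice lemma of Pingali--Takhtajan \cite{PiTa}, every $\spbp$-exact form is itself the Chern character form of a virtual hermitian holomorphic bundle with Chern connection, so the $\cE_k$ can be corrected to satisfy $\ch(\cE_k)=\alpha_k$ at the level of forms. This is a nontrivial external input, not a bookkeeping choice of metrics; without it (or some equivalent device, e.g.\ a conformal modification of the Quillen metric cancelling the exact discrepancy), your argument proves the theorem only with $\alpha_k$ replaced by the representatives $\ch_k(\cE_k,h_k)$, which is weaker than the statement as given.
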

We note the easy fact that for all $1\leq k\leq n-1$, for small $\alpha_k$ the generalised Monge-Amp\`ere equation has a solution \cite{Pingen}. \\
\begin{remark}
Note that theorem \ref{maingenma} is actually assuming the Hodge conjecture because the forms $\alpha_k$ are in arbitrary rational cohomology classes.
\end{remark}

\indent The next result produces a new almost Hitchin system of equations (depending on a positive integer $k$) akin to the almost Hermite-Einstein equation of Leung \cite{Leung}. We call solutions of the almost Hitchin equations, almost Higgs connections. The whole point is to give a differential geometric interpretation of Higgs-Gieseker stability which was recently studied in \cite{Gieskerhiggs}. Unfortunately, so far we have been able to give a complete proof only for the existence of approximate almost Higgs connections for Higgs-Mumford semistable bundles.
\begin{theorem}
Let $(E,h_E)$ be a holomorphic hermitian vector bundle. Suppose $\mathcal{A_E}^{1,1}$ is the space of smooth unitary integrable connections on $E$ and $\mathcal{B}=\Omega^{1,0} (End(E))$ is the space of endomorphism-valued $(1,0)$-forms. Assume that we are given a holomorphic $\Phi \in \mathcal{B}$ which satisfies $\Phi \wedge \Phi =0$. Fix an integer $l$.  \\
\indent There exists three holomorphic hermitian line bundles $\mathbf{Q_{k,l}}$, $\mathbf{H}_1$, and $\mathbf{H}_2$ on $\mathcal{A_E}^{1,1}\times \mathcal{B}$ depending on $E$ and $L^k$  with a unitary Chern connection whose curvatures $\Omega, \tilde{\Omega}_1$ and $\tilde{\Omega}_2$ are symplectic forms on $\mathcal{A_E}^{1,1}\times\mathcal{B}$. Moreover, the unitary gauge group $\mathcal{G}$ acts in a Hamiltonian manner on $\mathcal{A_E}^{1,1}\times \mathcal{B}$ with moment maps $\mu_{k,l}, \tilde{\mu}_1$ and $\tilde{\mu}_2$. There is a simultaneous zero of both moment maps in every complex gauge orbit if and only if the following system of equations is satisfied for a unitary connection $A$ giving an isomorphic complex structure on $E$ (via a gauge isomorphism $g$). 
\begin{gather}
[e^{k\omega+\frac{\sqrt{-1}}{2\pi}\Theta_A }Td(M,k\omega)]^{n,n} + n\frac{\sqrt{-1}}{2\pi}[\Phi,\Phi^{\dag}]\wedge (k+l)^{n-1} \omega^{n-1} = c_{k,l}\omega^n, \nonumber \\
d_A ^{0,1} g. \Phi =0 
\label{higgseq}
\end{gather}
where $c_{k,l}$ is a topological constant, $g. \Phi$ is the gauge transformation $g$ acting on $\Phi$ via the adjoint action, and $Td(M,k\omega)$ is the Todd form of the K\"ahler  metric $k\omega$.\\
\indent If the system \ref{higgseq} admits a smooth solutions $A_{k,l}$ for all sufficiently large $k>>1$ whose $C^a$ norms are bounded above uniformly in $k$ (for every fixed $a$), then the bundle $(E,\Phi)$ is Gieseker-Higgs stable.\\
\indent If $(E,\Phi)$ is irreducible (it is not a direct sum of $\Phi$-invariant holomorphic subbundles) and Higgs-Mumford semistable, then it admits an approximate solution to \ref{higgseq}, i.e., a connection that solves the following equations, for sufficiently large $k$ (that can potentially depend on $l$).
\begin{gather}
\Vert [e^{k\omega+\frac{\sqrt{-1}}{2\pi}\Theta_A }Td(M,k\omega)]^{n,n} + n\frac{\sqrt{-1}}{2\pi}[\Phi,\Phi^{\dag}]\wedge (k+l)^{n-1} \omega^{n-1} - c_{k,l}\omega^n\Vert < \epsilon, \nonumber \\
d_A ^{0,1} g. \Phi =0.
\label{approxhiggseq}
\end{gather}
\label{approxhiggs}
\end{theorem}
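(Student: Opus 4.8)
The plan is to establish the three assertions in turn, leaning on the Quillen-determinant and family-index machinery already deployed for the generalised Monge-Amp\`ere and Hitchin-system cases. For the line bundles and moment maps, I would take $\mathbf{Q_{k,l}}$ to be the Quillen determinant bundle of the family of Dolbeault operators $\bar\partial_A$ on $E\otimes L^k$ parametrised by $\mathcal{A_E}^{1,1}$, equipped with its Quillen metric and the induced Chern connection. The curvature $\Omega$ is then read off from the Bismut-Gillet-Soul\'e local family index theorem as the $(1,1)$-component of $\int_M e^{k\omega+\frac{\sqrt{-1}}{2\pi}\Theta_A}Td(M,k\omega)$, which is exactly the form whose contraction against the infinitesimal $\mathcal{G}$-action produces the first term of \eqref{higgseq}. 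The auxiliary bundles $\mathbf{H}_1,\mathbf{H}_2$ carry the Higgs data: their curvatures $\tilde{\Omega}_1,\tilde{\Omega}_2$ are the pairings built from $\frac{\sqrt{-1}}{2\pi}[\Phi,\Phi^{\dag}]\wedge(k+l)^{n-1}\omega^{n-1}$, handled just as the Higgs terms were in the Hitchin section. The moment map $\mu_{k,l}+\tilde\mu_1+\tilde\mu_2$ is computed by the standard contraction-and-integrate-by-parts recipe, and its zero locus is the first equation of \eqref{higgseq}; the second equation $d_A^{0,1}g.\Phi=0$ simply records that the complex-gauge orbit meets the integrable locus, i.e.\ that the transported Higgs field stays holomorphic. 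Once the curvature formula is in hand this part is formal.

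For the stability direction I would pass to the finite-dimensional GIT problem obtained by embedding via the sections of $E\otimes L^k$, where a simultaneous zero of the moment map is a balanced-type condition on $(E,\Phi)$. Restricting the moment-map identity to a saturated $\Phi$-invariant subsheaf and integrating, then inserting the asymptotic expansion in $k$ of the relevant density, yields an inequality between the normalised Hilbert polynomials of the subsheaf and of $(E,\Phi)$. The hypothesis that the solutions $A_{k,l}$ have $C^a$ norms bounded uniformly in $k$ is precisely what makes the expansion valid to the needed order and lets the leading inequality persist as $k\to\infty$, giving the strict inequality that defines Gieseker-Higgs stability.

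The heart of the matter, and the main obstacle, is producing approximate solutions when $(E,\Phi)$ is irreducible and Higgs-Mumford semistable. After dividing \eqref{higgseq} by the leading power $k^{n-1}$, the order-$k^0$ part of the endomorphism-valued equation is exactly the Higgs-Hermite-Einstein equation $\sqrt{-1}\Lambda_\omega(\Theta_A+[\Phi,\Phi^{\dag}])=\lambda\,\mathrm{Id}$, while the remaining terms are of order $k^{-1}$ and lower. I would first invoke the approximate Hitchin-Kobayashi correspondence for semistable Higgs bundles to obtain, for each target tolerance, a metric $h_0$ whose Higgs-Hermite-Einstein tensor is $C^0$-close to $\lambda\,\mathrm{Id}$, thereby solving the leading order up to a controllable error. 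I would then correct order by order in $1/k$, writing $h=h_0e^{f}$ with $f=\sum_{j\geq1}k^{-j}f_j$ and solving at each step a linear equation whose principal part is the Higgs Laplacian $\Delta_\Phi=\Delta+\sqrt{-1}[\Phi,[\Phi^{\dag},\,\cdot\,]]$; irreducibility forces the kernel of $\Delta_\Phi$ to consist only of scalars, which are absorbed into the topological constant $c_{k,l}$, so each $f_j$ exists and is unique modulo constants.

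The genuine difficulty is that in the strictly semistable case there is no exact Higgs-Hermite-Einstein metric: the approximate tensor converges only weakly and the low spectrum of $\Delta_\Phi$ can collapse along the Jordan-H\"older graded pieces, so the inverses used in the iteration threaten to blow up. Irreducibility is what keeps a true kernel from appearing, and the technical core is to show that the resulting inverse bounds degrade slowly enough, in the tolerance and in $k$, that the truncated correction still drives the left side of \eqref{approxhiggseq} to within $\epsilon$ of $c_{k,l}\omega^n$ for all large $k$. Throughout, the second equation of \eqref{approxhiggseq} is preserved for free by carrying out the entire perturbation inside the complex-gauge orbit, so that $\Phi$ is transported by $g$ and remains $\bar\partial_A$-closed.
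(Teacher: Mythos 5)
Your prequantisation step misassigns where the Higgs term comes from, and this matters. The curvature of the Quillen determinant bundle of the family $\bar{\partial}_A$ on $E\otimes L^k$ by itself only yields Leung's almost Hermite--Einstein form; it contains no $\Phi$ at all, so its moment map cannot produce the first equation of \ref{higgseq}. In the paper the term $n\frac{\sqrt{-1}}{2\pi}[\Phi,\Phi^{\dag}]\wedge(k+l)^{n-1}\omega^{n-1}$ arises because the Quillen metric on $\mathbf{Q}_{k,l}$ is multiplied by the conformal factor $\mathbf{f}=\exp\left(\int_M nTr(\Phi\wedge\Phi^{\dag})k^{n-1}\omega^{n-1}\right)$, whose contribution to the first Chern form supplies exactly the missing piece. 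Housing the $[\Phi,\Phi^{\dag}]$ pairing in $\mathbf{H}_1,\mathbf{H}_2$ instead, as you propose, gives a different system: the Higgs term would sit in separate equations rather than being added to the curvature equation. The bundles $\mathbf{H}_1,\mathbf{H}_2$ are needed for something else, namely to realise $d_A^{0,1}\Phi=0$ as the zero locus of the complex moment map $\tilde{\mu}$ -- it is a moment map equation, not bookkeeping about staying in the orbit -- and their construction is the one delicate point you skip: in complex dimension greater than one the Dey-type operators $\bar{\partial}\pm\Phi^{\dag}$ are not integrable, so their Quillen determinants are not obviously holomorphic; the paper circumvents this by taking $\mathbf{H}_1,\mathbf{H}_2$ to be \emph{trivial} line bundles with the explicit metrics \ref{othertwometrics}, whose first Chern forms are the real and imaginary parts of the complex symplectic form.

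The deeper gap is in your third part, and you name it yourself: your $1/k$ power-series correction around Saini's approximate metric needs inverse bounds for the Higgs Laplacian that survive as the tolerance shrinks and $k\to\infty$, the low spectrum can indeed collapse in the strictly semistable case, and you give no argument that it does not. The paper avoids this difficulty entirely with a level-set trick: fix $\epsilon$, take Saini's metric $h_{ap}$, set $\hbar=1/k$, and define $T(\hbar,g)$ on $\mathbb{R}\times\mathcal{P}^0_{p+2}$ as the rescaled left-hand side of \ref{higgseq} for the Chern connection of $h_{ap}g$, so that $\Vert T(0,I)\Vert<\epsilon$. The linearisation $DT_{(0,I)}(0,h)=\frac{n}{2\pi\omega^n}\left(\bar{\partial}d_A^{1,0}h\wedge\omega^{n-1}+[\Phi,[\Phi^{\dag},h]]\omega^{n-1}\right)$ is self-adjoint elliptic; pairing with $h$ and integrating by parts gives $d_Ah=0$ and $[h,\Phi]=0$, so irreducibility plus the normalisation force $h=0$, and the Fredholm alternative gives surjectivity. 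The implicit function theorem then makes the \emph{level set} $T^{-1}(T(0,I))$ -- not the zero set -- a Hilbert submanifold on which $g$ is a smooth function of $\hbar$ near $(0,I)$; for all large $k$ one obtains $g(\hbar)$ with $T(\hbar,g(\hbar))=T(0,I)$, hence error below $\epsilon$, while $d_A^{0,1}\Phi=0$ is preserved because only the metric moves. Since the theorem allows $k$ to depend on $\epsilon$, one application of the inverse function theorem with a fixed metric replaces your unresolved iteration; no uniform-in-$k$ estimates are ever required. (Your middle part is essentially fine in spirit: the subsheaf-plus-asymptotic-expansion argument is Leung's proof, which the paper invokes after deducing Higgs--Mumford semistability and taking a Jordan--H\"older filtration; the balanced-metric/GIT framing is neither used nor needed.)
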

\begin{remark}
Actually theorem \ref{approxhiggs} is equivalent to the special case when $l=0$. However, we chose to state it this way to emphasize that $(E,\Phi)$ is Gieseker-Higgs stable if and only if $(E\otimes L^l, \Phi)$ is Gieseker-Higgs stable.
\label{useless}
\end{remark}

\indent The final result is a moment map interpretation of the Calabi-Yang-Mills equations introduced by the author \cite{Pinchern} as a toy model of the more complicated K\"ahler-Yang-Mills equations (whose point is to uniformise triples $(X,E,L)$ of manifolds with vector bundles) introduced and studied in \cite{Garcia1, Garcia2, Garcia3}. In \cite{Pinchern} they were introduced from an analytic point of view in an admittedly ad hoc manner. The following result motivates them as the zero locus of the moment map associated to a symplectic form arising as the curvature of prequantum line bundle.
\begin{theorem}
Let $\mathcal{A}_{\mathcal{E}}^{1,1}$ and $\mathcal{A}^{1,1}$ be the same as in theorems \ref{maingenma} and \ref{approxhiggs}. Let $\mathcal{N} = \mathcal{A}_{\mathcal{E}}^{1,1} \times  \mathcal{A}^{1,1}$. Consider the gauge group $\mathcal{G}=\mathcal{G}_E \times \mathcal{G}_L$ of unitary transformations of $E$ and $L$.\\
\indent Given a rational number $\alpha>0$, there exists a hermitian holomorphic line bundle $\mathbb{Q_{\alpha}}$ on $\mathcal{N}$ whose curvature $\Omega_{\alpha}$ is a symplectic form. Moreover, the group $\mathcal{G}$ acts on $\mathcal{N}$ in a Hamiltonian manner with a moment map $\mu_{\alpha}$. There is a zero of the moment map in every complex gauge orbit if and only if the following system of equations (the Calabi-Yang-Mills equations) is satisfied by an appropriate $(A,\phi)$. 
\begin{gather}
\sqrt{-1} \Theta_A \wedge n\omega_{\phi}^{n-1} = -\lambda \omega_{\phi}^n Id \nonumber \\
\omega_{\phi}^n \left (1+\frac{\alpha\lambda ^2 r}{2} \right )-\eta = \alpha \mathrm{ch}_2(A) n(n-1)\omega_{\phi}^{n-2},
\label{cym}
\end{gather}
where $\lambda$ is a topological constant, and $\eta$ is an $(n,n)$-form.  Moreover, if $\eta>0$, $\alpha$ is small, $E$ is Mumford-stable with respect to $L$, then the Calabi-Yang-Mills equations have a smooth solution.  
\label{calabiyangmills}
\end{theorem}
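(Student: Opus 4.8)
The plan is to treat the theorem in two stages: first the formal differential geometry producing $\mathbb{Q}_{\alpha}$, $\Omega_{\alpha}$ and $\mu_{\alpha}$, and then the analytic existence statement. For the prequantum construction I would work over the universal family $M\times\mathcal{N}\to\mathcal{N}$, on which the two tautological connections (on the pullbacks of $E$ and of $L$) assemble into universal objects. Following the template of Theorems \ref{maingenma} and \ref{approxhiggs}, I would define $\mathbb{Q}_{\alpha}$ as a rational tensor combination of Quillen determinant bundles of virtual bundles built from these universal data; the hypothesis that $\alpha$ is rational is exactly what is needed to form the fractional power as an honest element of $\mathrm{Pic}\otimes\mathbb{Q}$. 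The Bismut-Gillet-Soul\'e curvature formula then expresses the curvature of the Quillen metric as a fibre integral $\Omega_{\alpha}=\int_M\Phi_{\alpha}$ of a characteristic form $\Phi_{\alpha}$ in the universal Chern-Weil data, and I would choose the virtual bundles so that $\Phi_{\alpha}$ is precisely the Atiyah-Bott/Chern-character expression plus $\alpha$ times the Calabi-Mabuchi expression coupled to $\mathrm{ch}_2$. Nondegeneracy of $\Omega_{\alpha}$ is then checked pointwise on the tangent space $\Omega^{0,1}(\mathrm{End}\,E)\oplus\Omega^{0,1}(M)$.

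Next I would extract the moment map. Because $\mathbb{Q}_{\alpha}$ is $\mathcal{G}$-equivariant with an invariant Quillen metric and Chern connection, the action of $\mathcal{G}=\mathcal{G}_E\times\mathcal{G}_L$ is automatically Hamiltonian, and $\mu_{\alpha}$ is read off by contracting the connection one-form of $\mathbb{Q}_{\alpha}$ with the fundamental vector fields. On the $\mathcal{A}_{\mathcal{E}}^{1,1}$ factor this reproduces the $\omega_{\phi}$-weighted Hermite-Einstein density, giving the first equation $\sqrt{-1}\Theta_A\wedge n\omega_{\phi}^{n-1}=-\lambda\omega_{\phi}^n\,\mathrm{Id}$; on the $\mathcal{A}^{1,1}$ factor the Fujiki-Donaldson mechanism produces a Calabi-type density, and the $\alpha$-coupling to the Yang-Mills energy $\mathrm{ch}_2(A)$ of the universal $E$-connection yields the second equation. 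The assertion that there is a zero of $\mu_{\alpha}$ in every complex orbit if and only if \eqref{cym} holds is the Kempf-Ness statement: the complexified orbit corresponds to varying the hermitian metric on $E$ and the K\"ahler potential $\phi$, and a distinguished representative in each orbit solves \eqref{cym}.

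For existence when $\alpha$ is small I would run the implicit function theorem starting from the decoupled system at $\alpha=0$. There the first equation becomes the ordinary Hermite-Einstein equation, solvable since $E$ is Mumford-stable by Donaldson-Uhlenbeck-Yau, and the second becomes the complex Monge-Amp\`ere equation $\omega_{\phi}^n=\eta$, solvable by Yau's theorem because $\eta>0$ and the cohomological normalization forces $\int_M\eta=\int_M\omega^n$. Let $(h_0,\phi_0)$ be this solution. I would set up a map $F(h_E,\phi,\alpha)$ whose zero set is \eqref{cym}, acting between suitable H\"older spaces, with the topological constants $\lambda$ and the coefficient $\alpha\lambda^2 r/2$ (and, if necessary, the total mass of $\eta$) adjusted so that both equations remain integrable along the family. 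The crucial observation is that every coupling term carries an explicit factor of $\alpha$, so the linearization $DF|_{(h_0,\phi_0,0)}$ in the $(h_E,\phi)$ directions is block-diagonal: the $\mathrm{End}\,E$ block is the Hermite-Einstein Laplacian, invertible modulo the scalars because stability makes $E$ simple, and the $\phi$ block is the $\omega_{\phi_0}$-Laplacian, invertible modulo constants. Absorbing these finite-dimensional obstructions into the prescribed topological constants makes $DF$ an isomorphism, the implicit function theorem produces solutions for small $\alpha$, and elliptic regularity upgrades them to smooth solutions.

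The main obstacle I anticipate is twofold and concentrated in the construction and the linearized analysis rather than in any single hard estimate. First, arranging the virtual bundles so that the Bismut-Gillet-Soul\'e fibre integral reproduces exactly the nonlinear $\alpha\,\mathrm{ch}_2(A)\,n(n-1)\omega_{\phi}^{n-2}$ coupling in \eqref{cym}, with the correct weights and with the $\omega_{\phi}$-dependence threaded through both factors, is a delicate matching problem. Second, on the existence side the genuine care lies in keeping the open positivity conditions $\omega_{\phi}>0$ and $\eta>0$ together with the slope normalizations consistent along the $\alpha$-family, so that the zeroth-order terms do not spoil the invertibility of the block-diagonal linearization. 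Once the decoupling at $\alpha=0$ is exploited, however, the perturbation argument should close in a standard way.
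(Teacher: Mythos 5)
Your proposal follows essentially the same route as the paper: Quillen determinant line bundles of virtual bundles built from the tautological connections on $M\times\mathcal{N}$, combined into a rational tensor power (this is exactly where $\alpha\in\mathbb{Q}$ is used), curvature computed via the Bismut--Gillet--Soul\'e families index theorem, a direct variational verification of the moment map, Kempf--Ness identification of zeroes with the Calabi--Yang--Mills system, and an implicit-function-theorem perturbation off the decoupled $\alpha=0$ system solved by Yau's theorem and Donaldson--Uhlenbeck--Yau (your block-diagonal linearization argument is in fact more detailed than the paper's one-line treatment of this step). The ``delicate matching problem'' you flag is resolved in the paper precisely by the $\tilde{\mathcal{L}}_k$ construction of Section \ref{MAsec}: one takes $\mathcal{E}=\tilde{E}\otimes\tilde{\mathcal{L}}_2$, $\mathcal{F}=\det(\tilde{E})\otimes\tilde{\mathcal{L}}_1$, and $\mathbf{Q}_{\alpha}=\mathbf{Q}_{\mathcal{E}}^{-\alpha}\otimes\mathbf{Q}_{\tilde{\mathcal{L}}_0}\otimes\mathbf{Q}_{\mathcal{F}}^{\alpha\lambda}$, whose curvatures combine to give the symplectic form \eqref{Omegaal}.
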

\begin{remark}
Normally, it is ``easy" to come up with a naturally occurring symplectic form whose moment map is the given PDE. It is harder to come up with the correct prequantum line bundle. However, in the case of theorem \ref{calabiyangmills}, interestingly enough, we come up with the correct symplectic form as the curvature of a very natural prequantum line bundle.
\label{strangerema}
\end{remark}
\emph{Acknowledgements} : Thanks is in order to Harish Seshadri for useful discussions. The author is also grateful to Joel Fine, Rukmini Dey, Leon Takhtajan, Richard Wentworth, Mario Garcia-Fernandez, and Indranil Biswas for answering some questions. The author acknowledges the support of an ECRA grant from SERB - ECR/2016/001356.
\section{The generalised Monge-Amp\`ere equation}\label{MAsec}
\indent In this section we prove theorem \ref{maingenma}. We adopt the same notation as in the statement of theorem \ref{maingenma}. \\
\indent The following setup is inspired by a similar one for the Calabi conjecture by Fine \cite{fine}. The space of unitary connections $\mathcal{A}$ is an affine space and is the same as the space of $\bar{\partial}$ operators which is actually a complex vector space. The integrable ones $\mathcal{A}^{1,1}$ form a complex submanifold of this manifold with tangent space at $A$ consisting of $(0,1)$-forms $a^{0,1}$ satisfying $\bar{\partial} a^{0,1}= 0$. It is clear that the gauge group $\mathcal{G}$ preserves integrability. The Lie algebra $T_I \mathcal{G}$ of the gauge group consists of imaginary functions $\sqrt{-1}H$. The bundle $\mathbf{Q}$ will be the Quillen determinant bundle (endowed with the corresponding Quillen metric) of a virtual bundle $\mathcal{E}$ on $M \times \mathcal{A}^{1,1}$. \\
\indent Firstly we prove a lemma that tells us what the ``correct" symplectic form $\Omega$ ought to be in order to get the generalised Monge-Amp\`ere equation.  
\begin{lemma}
The moment $\mu$ corresponding to the symplectic form
\begin{gather} 
\Omega_A(a,b) = W\displaystyle \int _M a \wedge b \wedge (n\omega_A^{n-1}-\sum_k (n-k)\alpha _k \omega_A^{n-k-1})
\label{symplecticform1}
\end{gather}
 is given by the following equation (where $W$ is any integer $\geq 1$).
\begin{gather}
\mu_A (\sqrt{-1}H) = W\displaystyle \int _M \sqrt{-1}H (\omega_A^n - \sum _k \alpha_k \omega_A^{n-k}). 
\label{mom}
\end{gather}
where $\omega_A$ is the first Chern form of the connection $A$.
\label{symplemm1}
\end{lemma}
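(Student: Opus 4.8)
The plan is to verify directly the defining relation of a moment map, namely $d\mu^{\sqrt{-1}H} = \iota_{X_{\sqrt{-1}H}}\Omega$ (up to the usual sign convention), together with equivariance; here $\mu^{\sqrt{-1}H}(A):=\mu_A(\sqrt{-1}H)$ and $X_{\sqrt{-1}H}$ is the fundamental vector field of the $\mathcal{G}$-action at $A$. First I would pin down this vector field. Since the gauge transformation $e^{\sqrt{-1}H}$ sends the abelian connection $A$ to $A-\sqrt{-1}\,dH$, the infinitesimal action is $X_{\sqrt{-1}H}=-\sqrt{-1}\,dH$, an exact imaginary $1$-form; in particular its $(0,1)$-part is $\bar{\partial}$-closed, so it is a genuine tangent vector to $\mathcal{A}^{1,1}$. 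For brevity write $\Psi_A:=n\omega_A^{n-1}-\sum_k (n-k)\alpha_k\omega_A^{n-k-1}$ for the $(n-1,n-1)$-form appearing in \eqref{symplecticform1}, so that $\Omega_A(a,b)=W\int_M a\wedge b\wedge\Psi_A$.

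Next I would differentiate the candidate Hamiltonian. Only the first Chern form $\omega_A$ depends on the connection, and under $A\mapsto A+tb$ one has $\tfrac{d}{dt}\big|_{0}\omega_A=\tfrac{\sqrt{-1}}{2\pi}\,db$. The key algebraic observation is that the weight $\Psi_A$ is exactly the $\omega_A$-derivative of the density defining $\mu$: differentiating $\omega_A^{n-k}$ produces $(n-k)\omega_A^{n-k-1}$ against the variation of $\omega_A$, so
\[
\tfrac{d}{dt}\Big|_{0}\Big(\omega_{A+tb}^{n}-\sum_k\alpha_k\,\omega_{A+tb}^{n-k}\Big)=\Psi_A\wedge\tfrac{\sqrt{-1}}{2\pi}\,db .
\]
Hence $d\mu^{\sqrt{-1}H}(b)=W\int_M \sqrt{-1}H\,\Psi_A\wedge\tfrac{\sqrt{-1}}{2\pi}\,db$. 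This is precisely the step that explains why the symplectic form in \eqref{symplecticform1} was chosen with this particular $\Psi_A$: it is forced to be the linearisation of the generalised Monge--Amp\`ere density in \eqref{mom}.

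The main step is then an integration by parts, whose validity rests on the closedness of $\Psi_A$. Because $\omega_A$ is the (closed) first Chern form and each $\alpha_k$ is a closed form, $d\Psi_A=0$, so $d(H\,\Psi_A\wedge b)=dH\wedge\Psi_A\wedge b+H\,\Psi_A\wedge db$ integrates to zero by Stokes, giving $\int_M H\,\Psi_A\wedge db=-\int_M dH\wedge\Psi_A\wedge b$. Substituting and reordering the wedge factors shows that $d\mu^{\sqrt{-1}H}(b)$ is a constant multiple of $\int_M dH\wedge b\wedge\Psi_A$, which is the same, up to a constant, as $\Omega_A\big(X_{\sqrt{-1}H},b\big)=W\int_M(-\sqrt{-1}\,dH)\wedge b\wedge\Psi_A$; under the paper's normalisation relating the connection variation to $\delta\omega_A$ and the chosen sign in the moment map equation the two constants agree, and the overall factor $W$ passes through harmlessly. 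Equivariance is then immediate: the Chern form $\omega_A$ is invariant under the abelian gauge action, so $\mu^{\sqrt{-1}H}$ is $\mathcal{G}$-invariant and the coadjoint action is trivial. (As a sanity check, the same closedness of $\Psi_A$ also yields $d\Omega=0$ by a cyclic integration by parts, consistent with $\Omega$ being symplectic.)

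The only genuine obstacle is bookkeeping rather than conceptual. One must ensure that $\Psi_A$ is closed so that no stray $d\Psi_A$ term survives the integration by parts, and track the factors of $\sqrt{-1}$, $2\pi$ and the sign of $X_{\sqrt{-1}H}$ consistently across the two sides. The conceptual heart is that $\Psi_A$ plays a double role — the weight in $\Omega$ and the $\omega_A$-derivative of the density in $\mu$ — which is exactly what makes \eqref{symplecticform1} and \eqref{mom} a matched pair and what forces the vanishing locus of $\mu$ to be the generalised Monge--Amp\`ere equation \eqref{genma}.
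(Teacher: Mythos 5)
Your proposal is correct and follows essentially the same route as the paper's proof: differentiate the candidate Hamiltonian along a tangent vector $b$, observe that the variation of the density $\omega_A^n-\sum_k\alpha_k\omega_A^{n-k}$ is exactly $\Psi_A$ wedged with the variation of $\omega_A$, and integrate by parts (using closedness of $\omega_A$ and the $\alpha_k$) to identify the result with $-\Omega(\sqrt{-1}\,dH,b)$. Your write-up is in fact a bit more scrupulous than the paper's about the $\tfrac{\sqrt{-1}}{2\pi}$ normalisation, the sign of the fundamental vector field, and equivariance, all of which the paper passes over silently.
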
 
\begin{proof}
Let $b$ be an imaginary one-form. By definition of the moment map, the variation of $\mu$ around $A$ along the tangent vector $b$ ought to be equal to $-\Omega (\sqrt{-1}dH,b)$. Indeed,
\begin{gather}
\delta_b \mu_A (\sqrt{-1}H) = N\displaystyle \int_M \sqrt{-1} H(n\omega^{n-1} db-\sum_k (n-k)\alpha_k \omega^{n-k-1} db)   \nonumber \\
= -N\displaystyle \int_M \sqrt{-1}dH(n\omega^{n-1} b-\sum_k (n-k)\alpha_k \omega^{n-k-1} b) = -\Omega(\sqrt{-1}dH,b). \nonumber
\end{gather}
\end{proof}
\indent The complexification $\mathcal{G}_{\mathbb{C}}$ (which is simply $\mathbb{C}^{\*}$-valued functions on $M$) acts on the $\bar{\partial}$ operators according to $g. A^{0,1} = A^{0,1}+\bar{\partial} \ln (g)$. Therefore the first Chern form changes to $\omega+\sqrt{-1}\partial \bar{\partial} \ln (\vert g\vert^2) = \omega_{\phi}$ where $\phi = \ln (\vert g\vert^2)$. Thus, satisfying the generalised Monge-Amp\`ere equation in a given K\"ahler class is equivalent to finding a zero of the moment map in a complex orbit. \\
\indent Before proceeding further, we prove the following amusing lemma. It implies that the usual Hodge conjecture is equivalent to a potentially harder version which is, ``Given a form in a Hodge class, is it a rational linear combination of Chern-Weil forms of hermitian holomorphic vector bundles endowed with Chern connections ?"
\begin{lemma}
If $\alpha$ is a $(k,k)$ real closed form on $M$ whose cohomology class $[\alpha] = [ch(E)]$ where $E$ is a holomorphic vector bundle and $ch$ is the Chern character class, then $\alpha = ch(\mathcal{V})$ at the level of forms, where $\mathcal{V}$ is a virtual hermitian holomorphic vector bundle with a Chern connection.
\label{Hodge}
\end{lemma}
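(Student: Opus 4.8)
The plan is to start from the given cohomological condition $[\alpha] = [\mathrm{ch}(E)]$ and manufacture a virtual bundle whose total Chern character form equals $\alpha$ exactly, not merely cohomologically. The key mechanism is the $\pbp$-lemma on the compact K\"ahler manifold $M$: since $\alpha$ and $\mathrm{ch}(E,h_E)$ (the Chern--Weil form computed from any chosen hermitian metric $h_E$ and its Chern connection) are both real closed $(k,k)$-forms in the same cohomology class, their difference is $\pbp$-exact. Thus I would write
\begin{gather}
\alpha - \mathrm{ch}(E,h_E) = \spbp \beta
\label{ddbarstep}
\end{gather}
for some real $(k-1,k-1)$-form $\beta$ (up to the usual normalizing constant). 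The whole problem then reduces to realizing the correction term $\spbp \beta$ as a Chern character form of a \emph{virtual} bundle, since Chern characters are additive under formal differences and $\mathrm{ch}(E,h_E)$ is already accounted for.

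The next step is to exploit the freedom in the metric. The Chern character form of a holomorphic bundle depends on the chosen hermitian metric, and the family of such forms (as the metric varies) sweeps out an affine space whose ``direction of motion'' is governed precisely by $\pbp$-exact forms, via the Bott--Chern transgression formula. Concretely, if I deform $h_E$ to $h_E e^{-f}$ for an endomorphism-valued potential, the degree-$2k$ component of $\mathrm{ch}$ changes by an explicit $\pbp$-exact term built from Bott--Chern secondary classes. The idea is therefore to absorb $\spbp \beta$ by a suitable metric change or, more robustly, to construct an auxiliary virtual bundle (for instance a formal combination of a bundle and a twist, or a difference $\mathcal{W} - \mathcal{W}'$ of bundles carrying cleverly chosen metrics) whose Chern character form is exactly the prescribed $\pbp$-exact correction. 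Setting $\mathcal{V} = E - (\text{that auxiliary virtual bundle})$ then yields $\mathrm{ch}(\mathcal{V}) = \alpha$ at the level of forms, with the Chern connection understood on each hermitian summand.

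The main obstacle I expect is the degree-$(k,k)$ matching: the $\pbp$-lemma only guarantees the \emph{leading} relevant component agrees, but $\mathrm{ch}$ of a bundle is an inhomogeneous form with components in all degrees, so I must ensure that correcting the degree-$2k$ piece does not spoil the lower-degree pieces, and conversely that the lower-degree components of $\alpha$ (which is pure $(k,k)$, hence has no other components) are matched by a virtual bundle whose total Chern character has vanishing components in all degrees except $2k$. The clean way to arrange this is to work degree by degree, inductively killing the discrepancy in each even degree using the $\pbp$-lemma in that degree and a virtual bundle whose Chern character is concentrated (up to $\pbp$-exact lower-order noise) in the targeted degree; the virtual/formal-difference framework is what makes this bookkeeping possible, since we may freely add and subtract bundles to cancel unwanted components. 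Once every degree is controlled, the resulting $\mathcal{V}$ is the desired virtual hermitian holomorphic bundle with $\mathrm{ch}(\mathcal{V}) = \alpha$ on the nose, completing the proof.
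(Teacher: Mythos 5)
Your reduction is the same as the paper's: fix a metric $h$ on $E$, use the $\pbp$-lemma to write $\alpha - ch(E,h) = \spbp\beta$, and then realize the correction $\spbp\beta$ exactly as the Chern character form of a virtual hermitian holomorphic bundle to be added to $(E,h)$. But at precisely this point your argument stops being a proof. The assertion that an arbitrary $\spbp$-exact form can be written, on the nose, as $ch(\mathcal{W})$ for some virtual hermitian holomorphic bundle $\mathcal{W}$ with Chern connections is not bookkeeping: it is the entire content of the lemma, and it is a genuinely nontrivial theorem --- the ``holomorphic Venice lemma'', Theorem 1 of \cite{PiTa} --- which the paper invokes by name at exactly this step. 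You neither prove such a statement nor cite anything that delivers it; you only say one should ``construct an auxiliary virtual bundle whose Chern character form is exactly the prescribed $\pbp$-exact correction'', which is a restatement of what needs to be shown, not a construction.

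Moreover, your first suggested mechanism --- absorbing $\spbp\beta$ by deforming the metric on $E$ --- cannot work in general. Bott--Chern theory does say that a metric change moves $ch(E,h)$ by a $\pbp$-exact form, but the set of forms obtained this way from the \emph{fixed} bundle $E$ is a very constrained subset of all $\pbp$-exact forms (for a line bundle, for instance, the correction is built out of a single scalar potential), so there is no reason for the given $\spbp\beta$ to lie in it; this is exactly why a new auxiliary virtual bundle, produced by the Venice lemma, is needed. The degree-by-degree concern you raise is legitimate --- $ch(E,h)$ is inhomogeneous while $\alpha$ is pure $(k,k)$, and the paper's own write-up is loose on this point --- but resolving it still requires, in each degree, the same missing realization theorem. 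With Theorem 1 of \cite{PiTa} in hand, the proof collapses to the paper's one-liner: take $\mathcal{W}$ with $ch(\mathcal{W}) = \spbp\beta$ and set $\mathcal{V} = \mathcal{W} \oplus (E,h)$.
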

\begin{proof}
Choose some metric $h_0$ for $E$. Then $\alpha - ch_k(E,h) = \spbp \eta$ for some real form $\eta$. By the holomorphic Venice lemma (Theorem 1 in \cite{PiTa}) it follows that $\spbp \eta = ch(\mathcal{W})$ for some virtual hermitian holomorphic bundle $\mathcal{W}$. Therefore $\mathcal{V}=\mathcal{W}\oplus(E,h)$ does the job. 
\end{proof}
\begin{remark}
Recall the well known fact that the Lefschetz theorem on $(1,1)$-forms and the Hard Lefschetz theorem imply that every $(n-1,n-1)$ Hodge class $[\alpha]$ is of the form $N[\alpha] = [ch(\mathcal{V})]$ for some large integer $N$ and a virtual holomorphic bundle $\mathcal{V}$. By lemma \ref{Hodge} this holds true even at the level of forms.
\label{hodgerema}
\end{remark}  
\indent Let $\mathcal{E}_k$ be a hermitian virtual bundle such that $ch(\mathcal{E}_k)=\alpha_k$. Suppose a holomorphic line bundle $L$ is equipped with a metric whose curvature is $F$. Consider the virtual bundles $\mathcal{L}_k$ given by the following equation.
\begin{gather}
\left [
\begin{array}{c}
\mathcal{L}_{n+1} \\
\mathcal{L}_{n} \\
\mathcal{L}_{n-1} \\
\vdots \\
\mathcal{L}_{0}
\end{array} \right ]
= N\left [ \begin{array}{ccccc} 1& 1& \frac{1}{2!} & \ldots & \frac{1}{(n+1)!} \\
 1& 2& \frac{2^{2}}{2!} & \ldots & \frac{2^{n+1}}{(n+1)!} \\
 \vdots& \vdots& \vdots & \ddots & \vdots \\
 1& (n+2)&  \frac{(n+2)^{2}}{2!} & \ldots & \frac{(n+2)^{n+1}}{(n+1)!}
 \end{array} \right ]^{-1} \left [ \begin{array}{c} L\\ L^2 \\ L^3 \\ \vdots \\ L^{n+2}\end{array} \right ],
\label{Lk}
\end{gather}
where $N$ is a large integer that clears the denominators. If $\mathcal{L}_k$ is equipped with the obvious metric, then it satisfies $ch(\mathcal{L}_k)=N \left(\frac{\sqrt{-1}F}{2\pi}\right)^{n+1-k}$. So we see that
\begin{gather}
 ch(n!N^{n} \mathcal{L}_0-\displaystyle \oplus _{k=1} ^{n} N^{n}(n+1)  \ldots (n-k+2) (n-k)\ldots 1 \mathcal{E}_k \otimes \mathcal{L}_k)\nonumber \\
 = (n+1)! N^{n+1}\left (\frac{1}{n+1}\left(\frac{\sqrt{-1}F}{2\pi}\right)^{n+1}-\sum _{k=1}^{n} \frac{1}{n-k+1}\alpha_k \left(\frac{\sqrt{-1}F}{2\pi}\right)^{n+1-k}\right).
\label{almostthere}
\end{gather}
\indent Define the hermitian virtual bundle $$\mathcal{L} =n! N^n L_0-\displaystyle \oplus _{k=1} ^{n}\frac{N^n (n+1)!}{n-k+1} \mathcal{E}_k \otimes \mathcal{L}_k$$ over $M$. Consider $\tilde{\mathcal{L}}=\pi_1^{*} \mathcal{L}$ over $M \times \mathcal{A}^{1,1}$. Define a connection $\mathbb{A}$ on $\tilde{\mathcal{L}}$ as $$\mathbb{A} (p, A) = A(p).$$ It is easy to see that this connection defines an integrable $\bar{\partial}$ operator on $\tilde{\mathcal{L}}$. We claim that
\begin{lemma}
The symplectic form $\Omega$ is the first Chern form of the Quillen metric on the Quillen determinant bundle of $\tilde{\mathcal{L}}$ (equipped with the aforementioned holomorphic structure).
\label{curvlem}
\end{lemma}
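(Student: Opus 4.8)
The plan is to invoke the Bismut--Gillet--Soul\'e curvature formula for the Quillen metric on the determinant line bundle of a family of $\bar{\partial}$-operators, and then to exploit the fact that $\ch(\tilde{\mathcal{L}})$ has been engineered to sit in a single total degree. Viewing $\pi_2 : M\times \A^{1,1}\to \A^{1,1}$ as a holomorphic fibration with fibre $M$ and $\tilde{\mathcal{L}}$ as a holomorphic virtual bundle on the total space equipped with the connection $\mathbb{A}$, the theorem of Bismut--Gillet--Soul\'e asserts that the first Chern form of the Quillen metric on $\mathbf{Q}=\det R\pi_{2*}\tilde{\mathcal{L}}$ equals (up to sign) the base-degree-two component of $\int_M \mathrm{Td}(TM,\omega)\,\ch(\tilde{\mathcal{L}},\mathbb{A})$, the Todd form being that of the fixed metric on the fibre $M$.

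First I would record the universal curvature. Write $\Theta=c_1(\tilde{\mathcal{L}},\mathbb{A})$ and decompose it by bidegree along $M\times\A^{1,1}$ as $\Theta=\omega_A+\psi$, where $\omega_A$ is the first Chern form of $A$ on the fibre (degree two in $M$, degree zero in the base) and $\psi$ is the tautological mixed term (degree one in $M$, degree one in the base). Because $\A^{1,1}$ is an affine space and $\mathbb{A}$ is the tautological connection $\mathbb{A}(p,A)=A(p)$, the purely basic $(0,2)$-part of $\Theta$ vanishes, and $\psi$ evaluated on a tangent vector $a$ is just $a$ regarded as an $M$-one-form. Hence $\psi^2$, evaluated on a pair $(a,b)$ of tangent vectors, is the $M$-two-form $a\wedge b$ up to the usual combinatorial constant.

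Second, and this is the crux, I would observe that by construction $\ch(\tilde{\mathcal{L}},\mathbb{A})$ is concentrated in total degree $2n+2$. Indeed, the Vandermonde-type matrix in \eqref{Lk} is precisely what forces $\ch(\mathcal{L}_k)=N\Theta^{\,n+1-k}$ to be a pure-degree form, so that \eqref{almostthere} gives $\ch(\tilde{\mathcal{L}},\mathbb{A})=(n+1)!\,N^{n+1}\left(\frac{1}{n+1}\Theta^{n+1}-\sum_k\frac{1}{n-k+1}\tilde{\alpha}_k\Theta^{\,n+1-k}\right)$, every summand of which has total degree exactly $2n+2$ (here $\tilde{\alpha}_k=\pi_1^{*}\alpha_k$ is basic, since the metric on $\mathcal{E}_k$ does not vary over $\A^{1,1}$). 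Consequently the base-degree-two part of $\ch(\tilde{\mathcal{L}})$ already carries $M$-degree $2n$, so when it is wedged with $\mathrm{Td}(TM)$ and integrated over $M$ only the degree-zero part of the Todd form, namely $1$, can contribute. This is the reason the Todd form is absent from $\Omega$ even though Bismut--Gillet--Soul\'e dictates its presence a priori.

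Finally I would extract the base-two component. Using $\psi_{\mathrm{base}}^{(0,2)}=0$, the only contribution to $[\Theta^{m}]^{\text{base }2}$ is $\binom{m}{2}\omega_A^{m-2}\psi^2$; applied with $m=n+1$ and $m=n+1-k$ and combined with the prefactors $\frac{1}{n+1}$ and $\frac{1}{n-k+1}$, the coefficients collapse to $\frac{n}{2}$ and $\frac{n-k}{2}$ respectively (note the $k=n$ term drops out, consistently with its coefficient $n-k$ vanishing in $\Omega$). Substituting $\psi^2(a,b)\propto a\wedge b$ then yields exactly $\Omega_A(a,b)=W\int_M a\wedge b\wedge\big(n\omega_A^{n-1}-\sum_k(n-k)\alpha_k\omega_A^{n-k-1}\big)$, where $W$ is the positive integer produced by the construction (the denominators having been cleared by choosing $N$ large), as required by Lemma \ref{symplemm1}. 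The main obstacle is not this bookkeeping but the analytic legitimacy of the ingredients: one must make sense of the Quillen metric and the Bismut--Gillet--Soul\'e identity for a \emph{virtual} bundle over the \emph{infinite-dimensional} base $\A^{1,1}$ (interpreting $\det R\pi_{2*}$ of a virtual bundle as the corresponding alternating tensor product of determinant lines, and verifying the curvature formula survives restriction to finite-dimensional families), and to fix the normalization of $\psi$ and the sign in the formula so that the output is genuinely $\Omega$ rather than its negative.
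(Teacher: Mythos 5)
Your proposal is correct and takes essentially the same route as the paper: both invoke the Bismut--Gillet--Soul\'e curvature formula (Theorem 1.27 of \cite{GS}) for the Quillen metric and then evaluate the base $(1,1)$-part of $\displaystyle\int_M \ch(\tilde{\mathcal{L}})\,Td(X)$ using the tautological universal curvature $\omega_A + \psi$ together with formula \eqref{almostthere}, the paper implementing your bidegree extraction by pulling back to a two-parameter family $\Phi(p,x,y)=A-xa-yb$, which is precisely the finite-dimensional restriction device you flag at the end. Your explicit remark that the degree concentration of $\ch(\tilde{\mathcal{L}})$ forces only the degree-zero term of the Todd form to survive is left implicit in the paper's proof, but the computation is the same.
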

\begin{proof}
By theorem 1.27 of \cite{GS} we see that the first Chern form of the Quillen metric of the Quillen determinant of $\tilde{\mathcal{L}}$ is given by 
\begin{gather}
\tilde{\Omega} = \displaystyle \int _X [ch(\tilde{\mathcal{L}})]^{1,1}Td(X).
\label{curvform}
\end{gather}
Consider a surface full of connections in $\mathcal{A}^{1,1}$ defined by $\Phi : X\times \mathbb{R}^2 \rightarrow \mathcal{A}^{1,1}$ as $\Phi (p,x, y) = A-xa-yb$. The first Chern form  of the bundle $\Phi^{*} \pi_1 ^{*} L$ equipped with the connection $\Phi^{*}\mathbb{A}$ is  $\mathbb{F}=\omega_A +adx+bdy$. Therefore, using formulae \ref{almostthere} and \ref{curvform}  we get
\begin{gather}
\tilde{\Omega}_{x=y=0} (a,b) = \Omega_{x=y=0}(a,b),
\end{gather}
thus proving the lemma.
\end{proof}
 
\section{The almost Hitchin equations}\label{Higgssec}
\indent The usual Hitchin equations \cite{Hitchin} are
\begin{gather}
\frac{\sqrt{-1}}{2\pi}(\Theta _A + [\Phi,\Phi^{\dag}]) \wedge \omega^{n-1} = \lambda \omega^n \nonumber \\
d_A ^{0,1} \Phi = 0
.\label{usual}
\end{gather}
\indent As described in \cite{Hitchin, Dey} on Riemann surfaces, solutions to Hitchin's equations \ref{usual} are simultaneous zeroes of three moment maps $\mu_1, \mu_2, \mu_3$ corresponding to three symplectic forms $\Omega_1, \Omega_2, \Omega_3$ on an infinite-dimensional Frech\'et manifold. These symplectic forms are K\"ahler with respect to three complex structures $I, J, K$ satisfying the quaternionic relations amongst themselves. In \cite{Dey} Dey produced prequantum line bundles and connections on them whose curvatures are the aforementioned symplectic forms, i.e., she prequantised the Hitchin equations. We aim to do something similar here for the almost Hitchin equations and prove that indeed solutions exist for this new system under the assumption of Gieseker stability, thus complementing the results of \cite{Gieskerhiggs}. \\
\indent The first lemma identifies the symplectic forms $\Omega$, $\tilde{\Omega} = \Omega_2+\sqrt{-1}\Omega_3$ on $\mathcal{A}_{\mathcal{E}}^{1,1}\times \mathcal{B}$ and their moment maps that give rise to \ref{higgseq}. Actually, a part of this lemma is contained in \cite{Leung2}. We note that the tangent space of $\mathcal{A}_E^{1,1}$ at $A$ is skew-hermitian endomorphism valued $1$-forms whose $(0,1)$ part is $d_A^{0,1}$-closed (and may also be identified with $d_A ^{0,1}$-closed endomorphism valued $(0,1)$-forms) and that of $\mathcal{B}$ with endomorphism valued $(1,0)$ forms.  
\begin{lemma}
Define the symplectic form 
\begin{align*}
\Omega (\alpha \oplus \gamma, \beta \oplus \delta) &= W\displaystyle \sqrt{-1}\int _M [Tr[\alpha \wedge \beta \wedge e^{\sqrt{-1}k\omega +\frac{\sqrt{-1}}{2\pi}\Theta_A} ]_{sym}Td(X,k\omega)]^{n,n} \\ &-W n\sqrt{-1}\displaystyle \int _M Tr[\gamma \wedge \delta^{\dag}+\gamma^{\dag}\wedge \delta] \omega^{n-1},
\end{align*}
 where $sym$ indicates an anti-symmetrized product, and the complex symplectic form
$$\tilde{\Omega} (\alpha \oplus \gamma, \beta \oplus \delta) = n\displaystyle \tilde{W}\int_M Tr[-\delta \wedge \alpha + \beta \wedge \gamma] \omega^{n-1}$$ be three symplectic forms on $\mathcal{S}=\mathcal{A}_E^{1,1}\times \mathcal{B}$ (where $W$ and $\tilde{W}$ are any two positive integers). Assume that the unitary gauge group $\mathcal{G}$ acts on $\mathcal{S}$ via gauge transformations of connections and the adjoint action on $\mathcal{B}$. The moment maps corresponding to the symplectic forms are as follows. (In what follows $g$ is a skew-hermitian endomorphism.)
\begin{align*}
\mu_{A,\Phi} (g) &= W\displaystyle \int _M \Bigg (\sqrt{-1}[Tr[ge^{k\omega+\frac{\sqrt{-1}}{2\pi}\Theta_A }]Td(X,k\omega)]^{n,n} \\&+ n\frac{\sqrt{-1}}{2\pi}Tr[g[\Phi,\Phi^{\dag}]](k+l)^{n-1}\omega^{n-1} - c_k Tr[g]\omega^n \Bigg )  \nonumber \\
\tilde{\mu}_{A,\Phi} (g) &= \tilde{W}n\displaystyle \int_M tr[gd_A^{0,1} \Phi] \omega^{n-1}.
\end{align*}
\label{sympapproxhiggs}
\end{lemma}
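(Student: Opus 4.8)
The plan is to verify the defining moment-map identity directly, using the sign convention fixed in Lemma~\ref{symplemm1}: for every tangent vector $v$ to $\mathcal{S} = \mathcal{A}_E^{1,1}\times\mathcal{B}$ one must have $\delta_v\mu_{A,\Phi}(g) = -\Omega(X_g, v)$, and likewise for $\tilde\mu$ and $\tilde\Omega$, where $X_g$ denotes the vector field generated by the skew-hermitian endomorphism $g$. First I would record this infinitesimal action. On the connection factor it is the covariant derivative $-d_A g$ (so the induced variation of the curvature is $\delta\Theta_A = -d_A d_A g = [g,\Theta_A]$, which is just the adjoint action, as it must be for a gauge flow), and on the Higgs factor $\mathcal{B}$ it is the adjoint action $[g,\Phi]$. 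I would then split $\mu_{A,\Phi}(g)$ into its three summands --- the Chern-character--Todd term, the $Tr[g[\Phi,\Phi^\dagger]]$ term, and the topological term $c_k\,Tr[g]\omega^n$ --- and match each against the appropriate slot of $\Omega$.

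For the Chern-character--Todd term I would vary along an arbitrary connection direction $\beta$. Since $\delta_\beta\Theta_A = d_A\beta$, the variation of the endomorphism-valued exponential produces precisely the antisymmetrized insertion $[\,d_A\beta\cdot e^{k\omega + \frac{\sqrt{-1}}{2\pi}\Theta_A}\,]_{sym}$; this is the combinatorial source of the symbol $[\,\cdot\,]_{sym}$ in the definition of $\Omega$. The key manipulation is then an integration by parts. Because $Td(X,k\omega)$ is a fixed closed scalar form and $e^{k\omega + \frac{\sqrt{-1}}{2\pi}\Theta_A}$ is $d_A$-closed by the Bianchi identity $d_A\Theta_A = 0$, the boundary-free term $\int_M Tr[\,g\, d_A\beta\wedge R\,]$ with $R = e^{\cdots}Td$ equals $\pm\int_M Tr[\,d_A g\wedge\beta\wedge R\,]$. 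Feeding the connection component $-d_A g$ of $X_g$ into the first slot of $\Omega$ then reproduces exactly $-\Omega(X_g,\beta\oplus 0)$, once the factors of $\frac{\sqrt{-1}}{2\pi}$ coming from the curvature variation are tracked against the normalization in $\Omega$.

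For the $\Phi$-dependent pieces I would vary along a Higgs direction. Using $\delta[\Phi,\Phi^\dagger] = [\,\cdot\,,\Phi^\dagger] + [\Phi,\,\cdot^\dagger\,]$ and cyclicity of the trace, the variation of the $Tr[g[\Phi,\Phi^\dagger]]$ term rearranges into $Tr[\,[g,\Phi]\wedge(\cdot)^\dagger + [g,\Phi]^\dagger\wedge(\cdot)\,]\omega^{n-1}$, which is exactly the second slot of $\Omega$ evaluated on the Higgs component $[g,\Phi]$ of $X_g$; the topological term does not vary. The complex moment map $\tilde\mu$ is treated identically: varying $\int_M Tr[g\, d_A^{0,1}\Phi]\omega^{n-1}$ along $(\beta\oplus\delta)$ splits into a term $Tr[g[\beta^{0,1},\Phi]]$ (matching the $\beta\wedge\gamma$ slot of $\tilde\Omega$ via cyclicity) and a term $Tr[g\, d_A^{0,1}\delta]$ which, after integrating $d_A^{0,1}$ by parts onto $g$ and using $d\omega = 0$, matches the $-\delta\wedge\alpha$ slot with $\alpha = -d_A g$. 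Here one also uses the hypotheses $\Phi\wedge\Phi = 0$ and the $d_A^{0,1}$-closedness of tangent vectors to $\mathcal{A}_E^{1,1}$. Non-degeneracy of the three forms follows from their pairings being perfect, and closedness of the Higgs slot of $\Omega$ (and of all of $\tilde\Omega$) is immediate since the relevant integrands are constant-coefficient.

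The step I expect to be the main obstacle is the non-commutative bookkeeping in the Chern-character--Todd term. Since $\Theta_A$ is endomorphism-valued it commutes neither with itself nor with $g$ inside the exponential, so one must carry the antisymmetrized product $[\,\cdot\,]_{sym}$ faithfully through both the variation and the integration by parts, and check that Bianchi annihilates exactly the spurious terms $Tr[\,g\,\beta\wedge d_A(e^{\cdots})\,]$. Keeping the constants $(k+l)^{n-1}$, the repeated factors of $\frac{\sqrt{-1}}{2\pi}$, and the $(n,n)$-degree projection mutually consistent between the two terms of $\mu$ --- and confirming that the connection slot of $\Omega$ is genuinely closed, which is ultimately guaranteed by its realization as a Quillen curvature in Theorem~\ref{approxhiggs} --- is where the real care is needed.
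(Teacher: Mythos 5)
Your proposal is correct and takes essentially the same route as the paper: the paper's proof likewise verifies $\delta\mu = -\Omega(d_A g\oplus[g,\Phi],\,\delta A\oplus\delta\Phi)$ and $\delta\tilde\mu = -\tilde\Omega(d_A^{0,1}g\oplus[g,\Phi],\,\delta A\oplus\delta\Phi)$ by varying the curvature ($\delta\Theta_A = d_A\delta A$), integrating by parts using the Bianchi identity and closedness of $\omega$ and the Todd form, and using (graded) cyclicity of the trace for the $[\Phi,\Phi^{\dag}]$ and $d_A^{0,1}\Phi$ terms. The only divergences are bookkeeping ones: the paper's infinitesimal action is $d_A g\oplus[g,\Phi]$ rather than your $(-d_A g)\oplus[g,\Phi]$ (with your sign the connection-slot term acquires the opposite sign relative to the Higgs-slot term, so the convention must be fixed to match the stated $\Omega$ and $\mu$), and neither $\Phi\wedge\Phi=0$ nor the $d_A^{0,1}$-closedness of tangent vectors is actually needed in the computation.
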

\begin{proof}
Indeed for $\mu_A$ we have the following.  
\begin{align*}
\frac{\delta \mu_{A,\Phi}}{W} (g) &= \displaystyle \int_M [\sqrt{-1}Tr[ge^{k\omega+\frac{\sqrt{-1}}{2\pi}\Theta_A }d_A \delta A]_{sym}Td(X,k\omega)]^{n,n} \\ &+ n\frac{\sqrt{-1}}{2\pi}Tr[g[\delta\Phi,\Phi^{\dag}]+g[\Phi,\delta \Phi^{\dag}]]k^{n-1}\omega^{n-1} \nonumber \\
&= - \displaystyle \int_M \sqrt{-1}[Tr[d_A ge^{k\omega+\frac{\sqrt{-1}}{2\pi}\Theta_A } \delta A]_{sym}Td(X,k\omega)]^{n,n} \\ &+n\frac{\sqrt{-1}}{2\pi}Tr[[g,\Phi^{\dag}]\delta \Phi+[g,\Phi]\delta \Phi^{\dag}]k^{n-1}\omega^{n-1}   \nonumber \\
&=-\frac{1}{W}\Omega (d_A g\oplus [g,\Phi], \delta A\oplus\delta \Phi). 
\end{align*}
Therefore, by definition $\mu$ is the moment map corresponding to $\Omega$. Likewise for $\tilde{\mu}$ we have a similar result. (Here we consider $\mathcal{A}_E ^{1,1}$ as a complex manifold.)
\begin{gather}
\frac{\delta \tilde{\mu}_{A,\Phi}}{\tilde{W}} (g) =  n\displaystyle \int_M (tr[gd_A^{0,1} \delta\Phi]+tr[g[\delta A^{0,1},\Phi]]) \omega^{n-1} \nonumber \\
= n\displaystyle \int_M (-tr[d_A^{0,1}g \delta\Phi]+tr[[g,\Phi]\delta A^{0,1}]) \omega^{n-1}
=-\frac{1}{\tilde{W}}\tilde{\Omega} (d_A^{0,1} g\oplus [g,\Phi], \delta A\oplus\delta \Phi).
\end{gather}
\end{proof}
\indent Next we prequantise the symplectic forms akin to \cite{Dey} to produce the line bundles $\mathbf{Q}_k$ and $\mathbf{H}_1, \mathbf{H}_2$ on $\mathcal{S}$ (possibly endowed with more than one complex structure). However, unlike \cite{Dey} for higher dimensions, the non-integrability of $\bar{\partial} + \Phi^{\dag}$ poses a small problem. \\

\indent Just like in section \ref{MAsec}, let $\tilde{E}=\pi_1^{*} E$ over $X \times \mathcal{S}$. \\

\indent $\mathbf{Q}_k$  : Endow $\mathcal{S}$ with the ``usual" complex structure $I$ which is simply multiplication by $\sqrt{-1}$ on each of the factors treated as complex manifolds ($\mathcal{A}_E ^{1,1}$ being identified with $(0,1)$-form valued skew-hermitian endomorphisms) and $\mathcal{B}$ with $(1,0)$-form valued endomorphisms). Equip $\tilde{E}$ with a connection  
$$\mathbb{A} (p, A, \Phi) = A(p).$$
This connection defines an integrable $\bar{\partial}$ operator on $\tilde{E}$. Define $\mathbf{Q}_k$ to be the Quillen determinant bundle of the virtual bundle $\mathcal{E} =\tilde{E}\otimes \tilde{\mathcal{L}_2}$ where $\tilde{\mathcal{L}_2}$ is the one in section \ref{MAsec} equipped with a metric (a generalisation of the one in \cite{Dey2}) $\mathbf{h}_k = \mathbf{h}_{Quillen} \mathbf{f}$ where $\mathbf{h}_{Quillen}$ is the Quillen metric and $\mathbf{f} = \exp{(\int_M nTr(\Phi \wedge \Phi^{\dag})k^{n-1}\omega^{n-1})}$.\\
\indent The curvature of the Quillen metric is given by the families index theorem just as in the proof of lemma \ref{curvlem}. The first Chern form of the modified metric is $\Omega_{quillen} -n \frac{\spbp}{2\pi} \displaystyle \int _M Tr(\Phi\wedge\Phi^{\dag})\omega^{n-1}$ which is indeed equal to the symplectic form $\Omega$. \\

\indent $\mathbf{H_1}, \mathbf{H}_2$ : In \cite{Dey} these forms are obtained as the curvatures of the Quillen determinants of virtual bundles formed using $\tilde{E}$ equipped with the $\bar{\partial}$ operators $\bar{\partial}\pm \Phi^{\dag}$ and $\bar{\partial}\pm \sqrt{-1}\Phi^{\dag}$. However, in higher dimensions, these operators are not integrable. Despite this, one can still carry over the families index theorem and Quillen determinant construction. (After all these do arise out of Clifford connections acting on a Clifford bundle.) The issue is that it is not clear that the corresponding Quillen bundles are holomorphic. To avoid these problems we propose a very simple construction, namely, $\mathbf{H}_1$ and $\mathbf{H}_2$ are trivial line bundles. In order to describe the metrics on these bundles, we first choose an element of $\mathcal{A}_E ^{1,1}$, i.e., a connection $A_0$. Then every $\bar{\partial}$ operator is $A_0^{0,1} +a$ for some endomorphism-valued $(0,1)$-form.
\begin{gather}
\mathbf{h}_1 = \exp(\mathfrak{R}(\displaystyle \int _M nTr(\Phi \wedge a) \omega^{n-1}))\nonumber \\
\mathbf{h}_2 = \exp(\mathfrak{Im}(\displaystyle \int _M nTr(\Phi \wedge a) \omega^{n-1}))
\label{othertwometrics}
\end{gather}
It is easily seen that the first Chern forms of these metrics are precisely the two desired symplectic forms (the real and imaginary parts of the complex symplectic form in lemma \ref{sympapproxhiggs}).
\\

\indent We now proceed to prove the rest of theorem \ref{approxhiggs}. To do this, we first show that if the almost Hitchin equations have a solution, then the bundle is Gieseker-Higgs stable. This is similar to Leung's proof \cite{Leung}.
\begin{lemma}
If $(E,\Phi)$ is irreducible and there exists a solution to the almost Hitchin equations \ref{higgseq} with uniform bounds independent of $k$ for sufficiently large $k$, then for every $\Phi$-invariant coherent subsheaf $\mathcal{S}$ of $(E,\Phi)$, 
\begin{gather}
\frac{\chi(\mathcal{S} \otimes L^k)}{rk(\mathcal{S})} < \frac{\chi(E \otimes L^k)}{rk(E)}
\end{gather}
for all $k$ large enough, where $\chi(Q)$ is the holomorphic Euler characteristic of $Q$. In other words, the bundle is Gieseker-Higgs stable. 
\label{gieskerstable}
\end{lemma}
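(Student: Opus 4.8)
The plan is to test the first equation of \ref{higgseq} against the orthogonal projection onto a $\Phi$-invariant subsheaf and read off the Gieseker--Higgs inequality, following Leung's proof \cite{Leung} for the almost Hermite--Einstein equation, the only new feature being the Higgs term. First I would make two reductions: by Remark \ref{useless} we may take $l=0$, and since the saturation of a $\Phi$-invariant coherent subsheaf is again $\Phi$-invariant and has larger $\chi(\,\cdot\otimes L^k)$ for large $k$, it suffices to defeat saturated $\mathcal{S}$. Applying the gauge transformation $g$, we may assume $A=A_{k,l}$ solves \ref{higgseq} with $d_A^{0,1}\Phi=0$, so that $(E,d_A^{0,1},\Phi)$ is a Higgs bundle and $\mathcal{S}$ is $\Phi$-invariant. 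A saturated subsheaf is a subbundle off an analytic set of complex codimension $\ge 2$, and by Uhlenbeck--Yau it is represented by an $L^2_1$ weakly holomorphic projection $\pi=\pi^{*}=\pi^2$ with $(\mathrm{Id}-\pi)\bar{\partial}\pi=0$ and, by invariance, $(\mathrm{Id}-\pi)\Phi\pi=0$. The core computation is then to take the trace of the first equation of \ref{higgseq} against $\pi$ and integrate over $M$.

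The right-hand side immediately gives $rk(\mathcal{S})\,c_{k,l}$; taking the ordinary trace of the full (endomorphism-valued) equation against $\mathrm{Id}$ and using $\Tr[\Phi,\Phi^{\dag}]=0$ together with Hirzebruch--Riemann--Roch identifies $c_{k,l}=\chi(E\otimes L^k)/rk(E)$. The left-hand side splits into two pieces. The characteristic-form piece $\int_M [\Tr(\pi\,e^{\frac{\sqrt{-1}}{2\pi}\Theta_A})\,e^{k\omega}\,Td(M,k\omega)]^{n,n}$ I would compare to $\chi(\mathcal{S}\otimes L^k)=\int_M \mathrm{ch}(\mathcal{S})\,e^{k\omega}\,Td(M)$ through the Gauss--Codazzi equations: the induced Chern connection on $\mathcal{S}$ differs from $\pi\Theta_A\pi$ by the second fundamental form, so at the level of $\mathrm{ch}_1$ one picks up the standard non-negative term $\tfrac{\sqrt{-1}}{2\pi}\Tr(\bar{\partial}\pi\wedge\partial\pi)$, contributing $\propto\int_M|\bar{\partial}\pi|^2\omega^n$ at order $k^{n-1}$ after wedging with $\tfrac{(k\omega)^{n-1}}{(n-1)!}$, while the higher $\mathrm{ch}_j$ corrections involve the curvature of $A$ and are of order $k^{n-j}$. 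Since the $A_{k,l}$ have $C^a$ norms bounded uniformly in $k$, these lower-order corrections carry uniformly bounded coefficients, so the characteristic piece exceeds $\chi(\mathcal{S}\otimes L^k)$ by $\tfrac{k^{n-1}}{(n-1)!}\cdot(\text{non-negative})+O(k^{n-2})$. For the Higgs piece $\int_M n\tfrac{\sqrt{-1}}{2\pi}\Tr(\pi[\Phi,\Phi^{\dag}])(k+l)^{n-1}\omega^{n-1}$ I would pass to the block decomposition relative to $\mathcal{S}\oplus\mathcal{S}^{\perp}$: $\Phi$-invariance kills the lower-left block, the diagonal contributions cancel by cyclicity of the trace, and one is left with a non-negative multiple of $\int_M|\Phi_{12}|^2\,\omega^n$, where $\Phi_{12}=\pi\Phi(\mathrm{Id}-\pi)$.

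Assembling these three facts yields $\chi(\mathcal{S}\otimes L^k)\le(\text{characteristic piece})\le(\text{characteristic piece})+(\text{Higgs piece})=rk(\mathcal{S})\,\chi(E\otimes L^k)/rk(E)$ for all large $k$, which is the non-strict Gieseker--Higgs inequality. To upgrade to the strict inequality I would invoke irreducibility: if equality held for arbitrarily large $k$, then both $\int_M|\bar{\partial}\pi|^2$ and the off-diagonal component $\Phi_{12}$ must vanish, so $\pi$ is parallel and $\Phi$-invariant, exhibiting $(E,\Phi)$ as a direct sum of $\Phi$-invariant holomorphic subbundles, contrary to hypothesis. I expect the main obstacle to be making the characteristic-form comparison rigorous for the \emph{full} Chern character rather than merely the degree --- that is, controlling all the higher second-fundamental-form corrections uniformly in $k$ so that the manifestly non-negative $O(k^{n-1})$ term dominates --- and, relatedly, justifying the Gauss--Codazzi identities and the convergence of the relevant integrals across the codimension $\ge 2$ singular locus of the coherent subsheaf by means of the $L^2_1$ regularity of $\pi$.
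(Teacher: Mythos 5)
Your overall strategy (test the first equation of \ref{higgseq} against the Uhlenbeck--Yau projection $\pi$ of an invariant saturated subsheaf, compare via Gauss--Codazzi, and use irreducibility in the equality case) is exactly the Leung-style route, and your treatment of the genuinely new ingredient --- the Higgs term reducing to a non-negative multiple of $\int_M|\pi\Phi(\mathrm{Id}-\pi)|^2\,\omega^n$ after the diagonal blocks cancel under the trace --- is correct, as is the identification $c_{k,l}=\chi(E\otimes L^k)/rk(E)$ via $\Tr[\Phi,\Phi^{\dag}]=0$. However, the step you yourself flag as ``the main obstacle'' is a genuine gap, and it is not a technicality: it is the entire content of the lemma. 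Your assembly reads ``characteristic piece $= \chi(\mathcal{S}\otimes L^k) + k^{n-1}\cdot(\text{non-negative}) + O(k^{n-2})$ with uniformly bounded coefficients, hence $\chi(\mathcal{S}\otimes L^k)\le$ characteristic piece.'' This inference fails because the non-negative quantity $\Vert\bar{\partial}\pi_k\Vert^2_{L^2}+\Vert\Phi_{12}\Vert^2_{L^2}$ depends on $k$ (the projection $\pi_k$ is orthogonal with respect to the $k$-dependent metric of $A_{k,l}$) and can be arbitrarily small, in which case the sign-indefinite $O(k^{n-2})$ corrections dominate. Indeed, in the only nontrivial case --- $\mu(\mathcal{S})=\mu(E)$, since for $\mu(\mathcal{S})<\mu(E)$ the Gieseker inequality holds for large $k$ from the leading coefficient alone --- the two sides of your identity force $\Vert\bar{\partial}\pi_k\Vert^2+\Vert\Phi_{12}\Vert^2=O(1/k)$, so the favourable $k^{n-1}$ term is necessarily comparable to the error terms, never dominant. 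To close the argument one must either show the corrections are bounded by $Ck^{n-2}\bigl(\Vert\bar{\partial}\pi_k\Vert^2+\Vert\Phi_{12}\Vert^2\bigr)$ (proportional to, not merely of the same order as, the good terms --- this requires integrating by parts the Gauss--Codazzi off-diagonal curvature blocks, which is delicate with only $L^2_1$ regularity of $\pi$ across the singular set), or run a limiting argument extracting a holomorphic splitting as $k\to\infty$. Relatedly, your strictness dichotomy ``if equality held for arbitrarily large $k$'' is misstated: since $\pi_k$ varies with $k$, the dichotomy must be applied separately for each fixed large $k$, and that only works once the error bound above is in place.

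For contrast, the paper does not attempt this direct estimate at all. It first extracts the leading order of \ref{higgseq} (after a conformal change of metric) to get an approximate Higgs--Hermite--Einstein connection, concludes Higgs--Mumford \emph{semistability} as in Leung, then passes to the Jordan--H\"older filtration $0\subset E_1\subset\cdots\subset E_{l+1}=E$ with stable equal-slope quotients and uniquely determined $gr(E)=\oplus Q_i$, and finally cites the proof of Proposition 3.1 of \cite{Leung} verbatim for the remaining (equal-slope) case --- i.e., precisely the step your proposal leaves open is delegated to Leung's argument, which needs the semistability and filtration structure you skipped. If you want a self-contained proof, you should either import that argument and verify it survives the addition of the Higgs term, or prove the proportional error bound indicated above.
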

\begin{proof}
By assumption it easily follows that given an $\epsilon>0$, for large enough $k$, the almost Higgs connection satisfies $\Vert \Lambda F + \Lambda [\Phi, \Phi^{\dag}] + \Lambda c_1 (X, k\omega) - cI\Vert < \epsilon$ where $c$ is a topological constant. By conformally changing the metric we can get a new connection satisfying $\Vert \Lambda F + \Lambda [\Phi, \Phi^{\dag}] - \tilde{c}I\Vert< \epsilon$. This easily implies (just as in \cite{Leung}) that the bundle is Higgs-Mumford semi-stable. Therefore it has a Jordan-H\"older filtration by coherent subsheaves 
$$0 \subset E_1 \subset E_2 \ldots \ldots E_{l+1} = E$$
whose quotients $Q_i = \frac{E_{i}}{E_{i-1}}$ are Higgs-Mumford stable and their Mumford slopes are all equal to that of the original bundle. Moreover, $gr(E)=\oplus Q_i$ is determined uniquely up to isomorphism. With this in place, the rest of the proof is the same as the proof of proposition 3.1 in \cite{Leung}. 
\end{proof}

\indent Let $\hbar=\frac{1}{k}$. Finally, we have the following lemma that completes the proof of theorem \ref{approxhiggs}.
\begin{lemma}
If $(E,\Phi)$ is irreducible and  Higgs-Mumford semistable, then it admits an approximate almost Higgs connection.
\label{approxalmosthiggs}
\end{lemma}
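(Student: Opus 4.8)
The plan is to adapt Leung's asymptotic construction for almost Hermite-Einstein metrics \cite{Leung} to the present Higgs-theoretic setting, treating \ref{higgseq} as a perturbation in powers of $\hbar = \frac{1}{k}$ of the Higgs-Hermite-Einstein equation. Expanding the $(n,n)$-form $[e^{k\omega + \frac{\sqrt{-1}}{2\pi}\Theta_A}Td(M,k\omega)]^{n,n}$ as a polynomial in $k$ with coefficients assembled from $\Theta_A$ and the Todd forms, the top-order term is a constant multiple of $k^n\omega^n \mathrm{Id}$, while the subleading term reproduces $\frac{\sqrt{-1}}{2\pi}\Lambda\Theta_A\,\omega^n$ up to scale. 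Together with the $[\Phi,\Phi^\dag]$ term (which enters at the same order through the $(k+l)^{n-1}$ factor) this shows that the leading nontrivial content of \ref{higgseq} is precisely the Higgs-Hermite-Einstein condition $\sqrt{-1}\Lambda(\Theta_A + [\Phi,\Phi^\dag]) = c\,\mathrm{Id}$.

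First I would invoke the Jordan-Hölder filtration already produced in the proof of Lemma \ref{gieskerstable}: since $(E,\Phi)$ is Higgs-Mumford semistable, $gr(E) = \oplus_i Q_i$ with each $(Q_i,\Phi_i)$ Higgs-Mumford stable of the same slope. By Simpson's Hitchin-Kobayashi correspondence for Higgs bundles each stable factor admits a Higgs-Hermite-Einstein metric keeping $\Phi_i$ holomorphic, and the direct-sum metric on $gr(E)$ solves the leading-order equation exactly. Transporting this back to $E$ (the extension data being harmless since we only seek an approximate solution) furnishes the zeroth-order approximate Higgs connection, compatible with the structure studied in \cite{Gieskerhiggs}.

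Next I would correct this to higher order in $\hbar$. Expanding \ref{higgseq} about the leading solution with an endomorphism-valued unknown $\psi$ (an infinitesimal change of metric) produces at each order a linear equation $L\psi = R_j$, where $L$ is the linearization of the Higgs-Hermite-Einstein operator — a Laplace-type operator $\Delta_A + [\Phi,[\Phi^\dag,\cdot]]$ on Hermitian endomorphisms — and $R_j$ is a remainder built from the lower-order corrections. Here the irreducibility of $(E,\Phi)$ is essential: it forces the kernel of $L$ to consist only of constant multiples of the identity, so $L$ is invertible on the trace-free part and each $R_j$ can be solved after subtracting its average, which is absorbed into the topological constant $c_{k,l}$. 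Iterating while preserving $d_A^{0,1}g.\Phi = 0$ at every stage (using that $\Phi\wedge\Phi = 0$ survives the construction) yields a formal power series solution in $\hbar$, and truncating at a sufficiently high order $N(a)$ produces a connection solving \ref{approxhiggseq} to within any prescribed $\epsilon$ in the relevant $C^a$ norm.

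The hard part will be making every estimate uniform in $k$. The operator $L$ and the remainders $R_j$ carry explicit powers of $k$ through $\Theta_A$, $k\omega$ and the $(k+l)^{n-1}$ factor, so the elliptic estimate for $L^{-1}$ must be shown to degenerate no worse than a fixed negative power of $k$, and the truncation order must be chosen so the accumulated error beats this growth. Equivalently, one rescales the metric (as in Lemma \ref{gieskerstable}) so that the leading operator becomes the genuine Higgs-Laplacian and then tracks the $k$-dependence of each subleading term. The semistable rather than stable case additionally requires checking that the off-diagonal extension terms of $gr(E) \to E$ contribute only at strictly positive order in $\hbar$, which is again where irreducibility enters to exclude destabilizing obstructions.
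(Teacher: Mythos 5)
Your identification of the leading-order structure of \ref{higgseq} (the Higgs--Hermite--Einstein condition) and of the key linearized operator $\bar{\partial} d_A^{1,0}h + [\Phi,[\Phi^{\dag},h]]$, whose kernel is killed by irreducibility, agrees with the paper. But the zeroth-order step of your scheme has a genuine gap. You propose to solve the leading equation by putting Simpson's Higgs--Hermite--Einstein metrics on the stable Jordan--H\"older factors and ``transporting'' the direct-sum metric on $gr(E)$ back to $E$, dismissing the extension data as harmless. It is not: the quotients $Q_i$ are in general only coherent sheaves, not bundles, so Simpson's correspondence does not apply directly; the second fundamental forms of the filtration are fixed geometric quantities that enter at order $\hbar^0$, not at strictly positive order in $\hbar$ as you assert; and, most fundamentally, when the filtration does not split, $gr(E)$ and $E$ carry non-isomorphic holomorphic structures, so a connection compatible with $gr(E)$ cannot satisfy the constraint $d_A^{0,1}g.\Phi=0$ relative to the complex structure of $E$ via any gauge transformation $g$. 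What you are implicitly re-deriving is precisely Saini's theorem \cite{Saini} (approximate Hermitian--Yang--Mills structures on semistable Higgs bundles), which is the non-trivial \emph{input} of the paper's proof, not a consequence of Simpson's correspondence plus an appeal to approximateness.

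The second gap is the iteration itself. Even granted a good zeroth-order metric, the paper explicitly notes that control of $\Lambda(F+[\Phi,\Phi^{\dag}])$ does not control the full curvature $F$ in $L^{\infty}$, whereas \ref{higgseq} involves the full wedge powers of $\Theta_A$ against $\omega$; so truncating a formal power series and estimating the error uniformly in $k$ -- the step you flag as ``the hard part'' and leave unresolved -- is exactly where a Leung-style expansion breaks down in the merely semistable case (Leung's construction in \cite{Leung} starts from an \emph{exact} Hermite--Einstein metric, available only under stability). The paper avoids the expansion altogether: starting from Saini's metric $h_{ap}$ it defines $T(\hbar,g)$, the discrepancy of the almost Hitchin equation for the metric $h_{ap}g$ normalized by $k^{n-1}\omega^n$, with $\hbar = 1/k$ treated as a continuous variable on a Hilbert manifold of metrics; $\Vert T(0,I)\Vert<\epsilon$ by Saini, the linearization in $g$ at $(0,I)$ is the self-adjoint elliptic Higgs Laplacian with trivial kernel (by irreducibility and the normalization of $g$), and the implicit function theorem produces $g(\hbar)$ with $T(\hbar,g(\hbar)) = T(0,I)$, hence of norm $<\epsilon$, for all sufficiently small $\hbar$. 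This single implicit-function-theorem step replaces your entire asymptotic series and yields the required bound for large $k$ with no uniformity analysis. To repair your argument, replace the $gr(E)$ construction by a citation of \cite{Saini} and the order-by-order iteration by this one-step deformation in $\hbar$.
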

\begin{proof}
  Indeed by \cite{Saini}, $(E,\Phi)$  admits a metric $h_{ap}$ whose Chern connection satisfies
\begin{gather}
\Vert \Lambda \frac{\sqrt{-1}}{2\pi}(F + [\Phi,\Phi^{\dag}]+c_1(X,\omega)) - cI\Vert < \epsilon.
\end{gather}
This by itself does not mean that that full curvature $F$ is controlled in the $L^{\infty}$ sense by $\epsilon$. Therefore we cannot use this to directly conclude the existence of an approximate almost Higgs connection. But consider the Hilbert manifold $\mathcal{P}_{p+2}(E)$ consisting of hermitian (with respect to $h_{ap}$) positive-definite endomorphisms of $E$ that are of the Sobolev class $H^{p+2}$ (where $p>>1$ to ensure that all the sections are at least $C^2$), and let $\mathcal{P}^0_{p+2}$ be a submanifold of $\mathcal{P}_{p+2}$ consisting of $g$ such that  $\displaystyle \int _M tr(g) \omega^n=1$.  Let $\mathcal{B}_{p}(E)$ consist of $H^{p}$ endomorphisms of $E$.  We shall define a map $T$ from $\mathbb{R} \times \mathcal{P}^0_{p+2,\alpha}$  to $\mathcal{B}_{p,\alpha}$ by the following. In what follows, $A$ is the connection corresponding to $h_{ap}$ and $A^g = A+g^{-1}d_A ^{1,0}g$ is the connection corresponding to $h_{ap}g$.
\begin{align}
T(\hbar,g) &=  -\frac{\sqrt{-1}}{k^{n-1}\omega^n}\Bigg ( [e^{k\omega+\frac{\sqrt{-1}}{2\pi}\Theta_{A^g} }Td(M,k\omega)]^{n,n} \nonumber \\&+ n\frac{\sqrt{-1}}{2\pi}[\Phi,\Phi^{\dag}]\wedge (k+l)^{n-1} \omega^{n-1} -c_{k,l} \omega^n \Bigg ) .
\end{align}
Note that $\Vert T(0,I) \Vert <\epsilon$. If we prove that $DT_{(0,I)}(0,h)$ is a surjective map, then by the inverse function theorem of Hilbert manifolds we see that the level set $T^{-1}(T(0,I))$ is a Hilbert submanifold where $g$ is locally a smooth function of $\hbar$, thus completing the proof. Indeed, computing the derivative we get,
\begin{gather}
DT_{(0,I)}(0,h) = \frac{n}{2\pi\omega^n }\bar{\partial} d_A ^{1,0} h \wedge \omega^{n-1} +\frac{n}{2\pi \omega^n}[\Phi, [\Phi^{\dag},h]]\omega^{n-1}.
\end{gather}
This is a self-adjoint elliptic operator and hence by the Fredholm alternative, it is enough to prove that it has no kernel. So we assume that $h$ satisfies
\begin{gather}
DT_{(0,I)}(0,h)=0.
\end{gather}
Multiplying the above equation by $\frac{1}{n}h2\pi \omega^n$, taking trace, and integrating-by-parts we obtain the following.
\begin{gather}
\displaystyle \int _M tr(d_A ^{1,0} h \bar{\partial} h) \omega^{n-1} +\int_M tr([h,\Phi][\Phi^{\dag},h]) \omega^{n-1}=0.
\end{gather}
This means that $d_A h=0$ (Note that $h$ is Hermitian and hence $d_A^{1,0}h=0$ implies that $d_A h=0$) and $[h,\Phi]=0$. So $(E,\Phi)$ decomposes into holomorphic eigenbundles of $h$. By the indecomposability assumption this means that $h=cI$ for a constant $c$ which is then forced to be zero by the normalisation assumption $\int tr(h) \omega^n =0$.
\end{proof}
\section{The Calabi-Yang-Mills equations}\label{Calabisec}
\indent In sections \ref{MAsec} and \ref{Higgssec}, the symplectic form whose moment map's zeroes are the given PDE, is clear. The difficulty is with finding the correct line bundle. In contrast, here we first come up with the prequantum line bundle and compute its curvature to get the right symplectic form.\\

\indent Consider the bundles $\tilde{E} = \pi_1^{*}E$ and $\tilde{L}=\pi_1^{*}L$ over $X \times \mathcal{N}$. Endow $\mathcal{N}$ with the same complex structure as in sections \ref{MAsec} and \ref{Higgssec}. Equip $\tilde{E}$ and $\tilde{L}$ with the connections 
$$\mathbb{A}_E(p,A_E, A_L) = A_E (p), \ and \ $$
$$\mathbb{A}_L(p,A_E, A_L) = A_L (p).$$ 
These connections (as before) define integrable $\bar{\partial}$ operators on the respective bundles. Now consider the virtual bundles $\tilde{\mathcal{L}}_0$, $\tilde{\mathcal{L}}_1$, and $\tilde{\mathcal{L}}_2$ defined in section \ref{MAsec} (possibly with different integers $N_1$, $N_2$, and $N_3$). Let $\mathcal{E} = \tilde{E}\otimes \tilde{\mathcal{L}}_2$, and $\mathcal{F} = \det(\tilde{E})\otimes \tilde{\mathcal{L}}_1$ be equipped with the induced connections.\\

\indent Define $\mathbf{Q}_{\mathcal{E}}$, $\mathbf{Q}_{\mathcal{F}}$ and $\mathbf{Q}_{\tilde{\mathcal{L}}_0}$ to be the Quillen determinant line bundles of these operators equipped with the Quillen metrics $\mathbf{h}_{E,Quillen}$, $\mathbf{h}_{F,Quillen}$, and $\mathbf{h}_{L,Quillen}$ respectively. Their first Chern forms $\Omega_{\mathcal{E}}$, $\Omega_{\mathcal{F}}$, and  $\Omega_{\tilde{\mathcal{L}}_0}$ are (as before) given by the families index theorem \cite{GS}. The form $\Omega_{\tilde{\mathcal{L}}_0}$ is almost exactly the same as in section \ref{MAsec} when all the $\alpha_k$ are zero --
\begin{gather}
\Omega_{\tilde{\mathcal{L}}} (a_E \oplus a_L,b_E \oplus b_L) =  \frac{(\sqrt{-1})^{n-1}}{(2\pi)^{n+1}}\displaystyle N_1 \int_M a_L \wedge b_L \wedge n\Theta_{L} ^{n-1},
\label{OmegaL}
\end{gather}
where $N_1$ is an integer.\\
The forms $\Omega_{\mathcal{E}}$ and $\Omega_{\mathcal{F}}$ on the other hand, are different. The reason is that $\mathcal{E}$ and  $\mathcal{F}$ depends on $L$ whereas $\tilde{\mathcal{L}}_0$ does not depend on $E$. Indeed,
\begin{align}
\Omega_{\mathcal{E}} (a_E \oplus a_L, b_E \oplus b_L) &= \displaystyle \int _M [ch(\mathcal{E})]^{1,1} (a_E \oplus a_L, b_E \oplus b_L) Td(X) \nonumber \\
&=  \displaystyle \int _M [ch(\tilde{E})ch(\tilde{\mathcal{L}}_2)]^{1,1} (a_E \oplus a_L, b_E \oplus b_L) Td(X) \nonumber 
\end{align}
which implies that
\begin{gather}
\frac{(2\pi)^{n+1}}{(\sqrt{-1})^{n-1}}\Omega_{\mathcal{E}} (a_E \oplus a_L, b_E \oplus b_L) = \displaystyle N_3\int _M tr(a_E \wedge b_E) n\Theta_{L} ^{n-1} \nonumber \\
 + N_3\int _M tr(\Theta_E^2) n{n-1 \choose 2} a_L \wedge b_L \wedge \Theta_{L} ^{n-3} \nonumber \\
+ \int _M N_3\left ( tr(\Theta_E a_E) b_L n(n-1) \Theta_{L} ^{n-2} - tr(\Theta_E b_E) a_L n(n-1)\Theta_{L}^{n-2} \right ) \nonumber 
\end{gather}
and 
\begin{align}
\Omega_{\mathcal{F}} (a_E \oplus a_L, b_E \oplus b_L) &=\displaystyle \int _M [ch(\mathcal{F})]^{1,1} (a_E \oplus a_L, b_E \oplus b_L) Td(X) \nonumber \\
&= \displaystyle \int _M [ch(\det(\tilde{E}))ch(\tilde{\mathcal{L}}_1)]^{1,1} (a_E \oplus a_L, b_E \oplus b_L) Td(X) \nonumber 
\end{align}
\begin{gather}
= N_2  \frac{(\sqrt{-1})^{n-1}}{(2\pi)^{n+1}}\int _M \Bigg ( {n \choose 2} tr(\Theta_E) \Theta_L^{n-2} a_L \wedge b_L +n\Theta_L^{n-1} tr(a_E) b_L + n\Theta_L^{n-1}a_L tr(b_E) \Bigg ).
\label{OmegaE}
\end{gather}
Now choose the integers in the definition of the virtual bundles to be so large that $\mathbf{Q}_{\mathcal{E}}^{\alpha}$ and $\mathbf{Q}_{\mathcal{F}}^{-\alpha \lambda}$ make sense, and choose potentially different $\tilde{\mathcal{L}}$s so that $N_1=N_2=N_3=N$. Then the bundle $\mathbf{Q}_{\alpha}=\mathbf{Q}_{\mathcal{E}}^{-\alpha}\otimes \mathbf{Q}_{\tilde{\mathcal{L}}_0} \otimes \mathbf{Q}_{\mathcal{F}}^{\alpha \lambda}$ is the desired bundle on $\mathcal{N}$ whose curvature form $\Omega_{\alpha}$ is
\begin{gather}
\Omega_{\alpha} = \Omega_{\tilde{\mathcal{L}}_0} - \alpha \Omega_{\mathcal{E}} - \lambda \alpha \Omega_{\mathcal{F}}\nonumber 
\end{gather}
meaning that
\begin{gather}
\frac{(2\pi)^{n+1}}{(\sqrt{-1})^{n-1}}\Omega_{\alpha} = -N\alpha \Bigg ( \displaystyle \int _M tr(a_E \wedge b_E) n\Theta_{L} ^{n-1} \nonumber \\ 
+  \int _M \left ( tr(\Theta_E a_E) b_L n(n-1) \Theta_{L} ^{n-2} + a_Ltr(\Theta_E b_E)  n(n-1)\Theta_{L}^{n-2} \right ) \nonumber \\
+\lambda \int_M n\Theta_L^{n-1} tr(a_E) b_L + \lambda \int_M n\Theta_L^{n-1}a_L tr(b_E) \Bigg ) \nonumber \\
+ N \Bigg (-\alpha \int _M tr(\Theta_E^2) n{n-1 \choose 2} a_L \wedge b_L \wedge \Theta_{L} ^{n-3}-\lambda \alpha \int_M {n \choose 2} tr(\Theta_E) \Theta_L^{n-2} a_L \wedge b_L \nonumber \\ 
+ \int_M a_L \wedge b_L \wedge n\Theta_{L} ^{n-1}  \Bigg ).
\label{Omegaal}
\end{gather}
Now we compute the moment map under the action of the gauge group.
\begin{lemma}
The moment map $\mu$ under the action of the unitary gauge group $\mathcal{G}$ is the following.
\begin{gather}
\frac{(2\pi)^{n+1}}{(\sqrt{-1})^{n-1}}\mu_{A_E, A_L} (g_E, g_L) =  -N\alpha  \displaystyle \int_M \Bigg ( tr(g_E \Theta_E)n\Theta_L ^{n-1}+\lambda tr(g_E) \Theta_L^n  \Bigg ) \nonumber \\
+ N\displaystyle \int_M g_L \Bigg (\Theta_L^n -\frac{(2\pi)^{n}}{(\sqrt{-1})^{n}} \eta-\alpha tr(\Theta_E^2) \frac{n(n-1)}{2} \Theta_L^{n-2} -\lambda \alpha tr(\Theta_E) \frac{n}{2} \Theta_L^{n-1} \Bigg ).
\label{momentmapexp}
\end{gather}
\label{momentmapundergauge}
\end{lemma}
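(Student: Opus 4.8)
The plan is to verify the claimed formula \eqref{momentmapexp} directly against the defining property of a moment map, exactly as in the proofs of Lemmas \ref{symplemm1} and \ref{sympapproxhiggs}: for a Lie algebra element $(g_E,g_L)$ --- with $g_E$ a skew-hermitian endomorphism of $E$ and $g_L=\sqrt{-1}H$ an imaginary function --- the moment map must satisfy $\delta_{(b_E,b_L)}\mu(g_E,g_L)=-\Omega_\alpha\big(X_{(g_E,g_L)},(b_E,b_L)\big)$ for every tangent vector $(b_E,b_L)=(\delta A_E,\delta A_L)$. First I would record the infinitesimal action of $\mathcal{G}=\mathcal{G}_E\times\mathcal{G}_L$ on $\mathcal{N}$: it sends $(g_E,g_L)$ to the vector field $X_{(g_E,g_L)}=(d_{A_E}g_E,\,dg_L)$, the $L$-component being an ordinary differential because $\mathcal{G}_L$ is abelian (this is the same action used in Lemma \ref{symplemm1}). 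Substituting $a_E\oplus a_L=d_{A_E}g_E\oplus dg_L$ into \eqref{Omegaal} then gives an explicit expression for the right-hand side.

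Next I would compute the left-hand side. Writing $\Theta_E$ and $\Theta_L$ for the curvatures, their first variations along $(b_E,b_L)$ are $\delta\Theta_E=d_{A_E}b_E$ and $\delta\Theta_L=db_L$. Differentiating \eqref{momentmapexp} term by term (holding $g_E,g_L,\eta$ fixed) produces integrals containing $d_{A_E}b_E$ and $db_L$; I would then integrate by parts, using the Bianchi identities $d_{A_E}\Theta_E=0$ and $d\Theta_L=0$ to discard every term in which the derivative lands on a curvature factor. What survives are precisely integrals pairing $g_E$ and $g_L$ (through $d_{A_E}g_E$ and $dg_L$) against $b_E$ and $b_L$ wedged with powers of $\Theta_E$ and $\Theta_L$, which I would then match against the substituted form of \eqref{Omegaal}.

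The key structural point --- and where I would focus the verification --- is the cross terms. The variation of the $tr(\Theta_E^2)\Theta_L^{n-2}$ summand in the $g_L$-component of \eqref{momentmapexp} yields $2\int_M g_L\,tr(\Theta_E\,d_{A_E}b_E)\Theta_L^{n-2}$; integrating by parts and invoking Bianchi moves this to $-2\int_M tr(\Theta_E b_E)\,dg_L\wedge\Theta_L^{n-2}$, which is exactly the mixed $a_L$--$b_E$ coupling of $\Omega_{\mathcal{E}}$ recorded in \eqref{OmegaE} (up to the combinatorial factor and sign). Similarly, the variation of the $tr(\Theta_E g_E)\Theta_L^{n-1}$ term reproduces the $tr(a_E\wedge b_E)\Theta_L^{n-1}$ block, while the $tr(g_E)\Theta_L^n$ and $g_L\Theta_L^n$ terms reproduce the purely abelian couplings appearing in $\Omega_{\mathcal{F}}$ and $\Omega_{\tilde{\mathcal{L}}}$ in \eqref{OmegaE} and \eqref{OmegaL}. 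Throughout I would carry the overall normalisation $\tfrac{(2\pi)^{n+1}}{(\sqrt{-1})^{n-1}}$ unchanged, since it multiplies both sides.

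Finally, I would address the term $-\tfrac{(2\pi)^{n}}{(\sqrt{-1})^{n}}\eta$: because $\eta$ is a fixed $(n,n)$-form it is annihilated by $\delta_{(b_E,b_L)}$, so it contributes nothing to the matching. Its presence is legitimate precisely because a moment map is determined by the defining equation only up to an additive constant in $\mathfrak{g}^{*}$; since $\mathcal{G}_L$ is abelian, adding the linear functional $g_L\mapsto -\int_M g_L\,\eta$ is an allowed shift, and $\eta$ is chosen so that the vanishing locus of $\mu$ recovers the Calabi-Yang-Mills system \eqref{cym}. I expect the main obstacle to be purely computational bookkeeping: keeping track of the signs produced by reordering odd-degree forms in the wedge products, and of the binomial coefficients ($\binom{n}{2}$, $\binom{n-1}{2}$, $n(n-1)$, $\tfrac{n(n-1)}{2}$) generated when differentiating powers of $\Theta_L$, so that the integrated-by-parts left-hand side coincides termwise with $-\Omega_\alpha(X_{(g_E,g_L)},\cdot)$.
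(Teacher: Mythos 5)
Your proposal is correct and takes essentially the same route as the paper's proof: compute the variation of the claimed formula via $\delta\Theta_E = d_{A_E}\delta A_E$, $\delta\Theta_L = d_{A_L}\delta A_L$, integrate by parts, and match the result against $-\Omega_{\alpha}(d_{A_E}g_E \oplus d_{A_L}g_L, \delta A_E \oplus \delta A_L)$. Your two supplementary observations --- that the Bianchi identities are what allow the integration by parts to close up, and that the constant $\eta$-term is a legitimate additive shift of the moment map because $\mathcal{G}_L$ is abelian --- are left implicit in the paper but are exactly right.
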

\begin{proof}
Indeed, computing the variation of $\mu$ along $\delta A_E \oplus \delta A_L$ we get the following expression.
\begin{gather}
\frac{(2\pi)^{n+1}}{(\sqrt{-1})^{n-1}}\delta \mu_{A_E, A_L} (g_E, g_L) = -N\alpha  \displaystyle \int_M \Bigg ( tr(g_E d_{A_E} \delta A_E)n\Theta_L ^{n-1}\nonumber \\
+ tr(g_E \Theta_E) n(n-1)\Theta_L^{n-2} d_{A_L} \delta A_L
+  \lambda tr(g_E) n\Theta_L^{n-1}d_{A_L} \delta A_L  \Bigg )\nonumber \\ 
+ N\displaystyle \int_M g_L \Bigg (n\Theta_L^{n-1}d_{A_L}\delta A_L 
-\alpha tr(2\Theta_E d_{A_E} \delta A_E) \frac{n(n-1)}{2} \Theta_L^{n-2}\nonumber \\ - \alpha tr(\Theta_E^2) \frac{n(n-1)(n-2)}{2} \Theta_L^{n-3} d_{A_L} \delta A_L 
 -\lambda \alpha tr(d_{A_E} \delta A_E) \frac{n}{2} \Theta_L^{n-1} \nonumber \\ - \lambda \alpha tr(\Theta_E) \frac{n(n-1)}{2} \Theta_L^{n-2}d_{A_L}\delta A_L\Bigg ) 
= - \frac{(2\pi)^{n+1}}{(\sqrt{-1})^{n-1}}\Omega_{\alpha} (d_{A_E} g_E \oplus d_{A_L}g_L, \delta A_E \oplus \delta A_L ),
\end{gather}
where the last equality is obtained by integration-by-parts. Therefore by definition, indeed $\mu$ is the moment map.
\end{proof}
Lemma \ref{momentmapundergauge} implies that the zeroes of the moment map precisely correspond to the solutions of the Calabi-Yang-Mills equations. \\
\indent The only thing left to complete the proof of theorem \ref{calabiyangmills} is to prove that for small $\alpha$ there exists a solution. This is indeed an easy implicit function theorem argument perturbing around $\alpha=0$ where there is a solution thanks to the Calabi conjecture and the Donaldson-Uhlenbeck-Yau theorem. \qed


\begin{thebibliography}{99}
\bibitem{GS} J-M Bismut, H. Gillet, and C. Soul\'e. ``Analytic Torsion and Holomorphic Determinant Bundles III - Quillen Metrics on Holomorphic Determinants." \emph{Comm. Math. Phys.} 115, 301-351 (1988).
\bibitem{Garcia1} L.Alvarez-Consul, M. Garcia-Fernandez, and O. Garcia-Prada. ``Coupled equations for K\"ahler metrics and Yang-Mills connections". \emph{Geom. Top.} 17, 2731-2812 (2013).
\bibitem{Garcia2} L.Alvarez-Consul, M. Garcia-Fernandez, and O. Garcia-Prada. ``Gravitating vortices, cosmic strings,  and the K\"ahler-Yang-Mills equations". arXiv: 1510.03810. (To appear in \emph{Comm. Math. Phys.}.)
\bibitem{Garcia3} L.Alvarez-Consul, M. Garcia-Fernandez, and O. Garcia-Prada. ``Gravitating vortices and the Einstein-Bogomol'nyi equations". arXiv: 1606.07699.
\bibitem{Gieskerhiggs} S. A. H. Cardona and O. Mata-Guti´errez. ``On Gieseker stability for Higgs sheaves." arXiv : 1603.03100.
\bibitem{Dey} R. Dey. ``Hyperk\"ahler prequantization of the Hitchin system and Chern-Simons theory with complex gauge group." \emph{Adv. Theor. Math. Phys.} 11 (2007) : 819-837.
\bibitem{Dey2} R. Dey. ``Geometric quantization of the Hitchin system." arXiv : 1604.01650.
\bibitem{donald} S. Donaldson. ``Anti self-dual Yang-Mills connections over complex algebraic surfaces and stable vector bundles." \emph{Proc. London Math. Soc.} 50.1, 1-26 (1985).
\bibitem{don2} S. Donaldson. ``Infinite determinants, stable bundles and curvature." \emph{Duke Math. J.} 54(1), 231-247 (1987).
\bibitem{Do} S. Donaldson. ``Scalar curvature and projective embeddings II." \emph{Q. J. Math.} 56(3), 345-356 (2005).
\bibitem{fine} J. Fine. ``The Hamiltonian geometry of the space of unitary connections with symplectic curvature." \emph{J. Symp. Geom.} 12.1 (2014): 105-123.
\bibitem{Hitchin} N. Hitchin. ``The self-duality equations on a Riemann surface." \emph{Proc. Lon. Math. Soc.} 3(55) (1986) : 59-126.
\bibitem{Kob} S. Kobayashi. ``Differential geometry of complex vector bundles." Princeton University Press, 2014.
\bibitem{Leung} N. C. Leung. ``Einstein type metrics and stability on vector bundles." \emph{J. Diff. Geom.} 45, 514-546 (1999).
\bibitem{Leung2} N. C. Leung. ``Symplectic structures on Gauge Theory." \emph{Comm. Math. Phys.} 193 (1998) : 47-67.
\bibitem{Oscar}  O. Garc\'ia-Prada. ``Invariant connections and vortices." \emph{Commun. Math. Phys.}, 156 (1993) 527–546.
\bibitem{Pingen} V. Pingali. ``On a generalised Monge-Amp\`ere equation." \emph{J. Part. Diff. Eq.}, 27(4) (2014) 333-346.
\bibitem{Pinchern} V. Pingali. ``Representability of Chern-Weil forms." arXiv : 1608.06125. 
\bibitem{PiTa} V. Pingali and L. Takhtajan. ``On Bott-Chern forms and their applications." \emph{Math. Ann.} 360. 1-2 (2014) :519-546.
\bibitem{Saini} E. Saini. ``Approximate Hermitian-Yang-Mills structures on semistable Higgs bundles." arXiv :  1312.2736.
\bibitem{UY} K. Uhlenbeck and S.T. Yau. ``On the existence of hermitian-yang-mills connections in stable vector bundles." \emph{Comm. Pure App. Math.} 39(S1), S257-S293 (1986).
\end{thebibliography}
\end{document}